\documentclass[jmp,amsmath,amssymb,preprint,nobibnotes,secnumarabic]{revtex4-2}

\usepackage{multirow}
\usepackage{amsmath,amsfonts,amssymb,latexsym,amsthm}
\usepackage{color}
\usepackage{graphicx}
\usepackage{mathrsfs}
\usepackage{physics}
\usepackage{mathtools}
\usepackage{mathptmx}
\usepackage{dcolumn}
\usepackage{bm}
\usepackage{stmaryrd}
\usepackage[utf8]{inputenc}
\usepackage{fancyhdr}
\usepackage[T1]{fontenc}
\usepackage{soul}
\usepackage{tikz}
\usetikzlibrary{decorations.pathreplacing,calc,arrows.meta,positioning}
\usepackage{standalone}
\usepackage{tkz-base}
\usepackage{tikz-3dplot}
\usepackage{pgfplots}
\pgfplotsset{compat=1.15}
\usepgfplotslibrary{groupplots}
\usepackage{epsfig,epstopdf}
\usepackage[all]{xy}
\usepackage[unicode=false]{hyperref}
\usepackage{url}
\usepackage{dsfont}
\usepackage{slashed}
\usepackage{bbm}
\usepackage{enumitem}
\usepackage{silence}

\definecolor{myblue}{HTML}{1C3F6E}
\definecolor{myteal}{HTML}{0D7377}
\definecolor{myaccent}{HTML}{14BDAC}
\definecolor{mybg}{HTML}{F4F8FB}
\definecolor{hervecolor}{rgb}{0.8,0,0.7}

\newtheorem{prop}{Proposition}

\newenvironment{remark}[1][Remark]{\begin{trivlist}
\item[\hskip\labelsep{\bfseries #1}]}{\end{trivlist}}
\newenvironment{example}[1][Example]{\begin{trivlist}
\item[\hskip\labelsep{\bfseries #1}]}{\end{trivlist}}

\newcommand{\beprop}{\begin{prop}}
\newcommand{\enprop}{\end{prop}}
\newcommand{\bprf}{\begin{proof}}
\newcommand{\eprf}{\end{proof}\qed}
\newcommand{\CP}{\mathbb{C}\mathrm{P}}
\newcommand{\Liouv}{\mathcal{L}}
\DeclareMathOperator{\Vol}{Vol}
\DeclareMathOperator{\diag}{diag}

\newcommand{\SU}{\mathrm{SU}}
\newcommand{\scalar}[2]{\langle\kern.3ex #1\kern.3ex|\kern.3ex#2\kern.3ex\rangle}

\newcommand{\eu}{\mathsf{e}}
\newcommand{\ii}{\mathsf{i}}

\newcommand{\br}{\mathbf{r}}
\newcommand{\bs}{\mathbf{s}}

\newcommand{\bu}{\mathbf{u}}

\newcommand{\bv}{\mathbf{v}}
\def\be{\mathbf{e}}
\def\Pr{\mathsf{P}}
\def\ba{\mathbf{a}}
\def\un{\mathbbm{1}}

\newcommand{\R}{\mathbb{R}}

\newcommand{\C}{\mathbb{C}}

\def\lg{\langle}
\newcommand{\rg}{\rangle}

\newcommand{\ud}{\mathrm{d}}

\newcommand{\Hc}[1]{H_{#1}}
\newcommand{\E}[2]{E_{#1#2}}

\def\vph{\pmb{\phi}}

\def\sfD{\mathsf{D}}

\begin{document}
\numberwithin{equation}{section}

\title{Spectral-angular parametrization of open qudit dynamics:\\
gap coordinates, Cartan structure, and GKLS decoupling}

\author{Jean-Pierre Gazeau}
\email{gazeau@apc.in2p3.fr, j.gazeau@uwb.edu.pl}
\affiliation{Universit\'e Paris Cit\'{e}, CNRS, Astroparticule et Cosmologie,
  F-75013 Paris, France}
\affiliation{Faculty of Mathematics, University of Bia\l ystok,
  15-245 Bia\l ystok, Poland}

\author{Kaoutar El Bachiri}
\email{kaoutar-elbachiri@um5r.ac.ma}
\affiliation{ESMaR, Faculty of Sciences at Univ. Mohammed V - Rabat, 1014 Rabat, Morocco}

\author{Zakaria Bouameur}
\email{zakaria.bouameur@um5r.ac.ma}
\affiliation{ESMaR, Faculty of Sciences at Univ. Mohammed V - Rabat, 1014 Rabat, Morocco}

\author{Yassine Hassouni}
\email{y.hassouni@um5r.ac.ma}
\affiliation{ESMaR, Faculty of Sciences at Univ. Mohammed V - Rabat, 1014 Rabat, Morocco}

\date{\today}

\begin{abstract}

Every $n$-level density matrix can be diagonalized as
$\rho=U\diag(p_1,\ldots,p_n)U^\dagger$, splitting it into an
eigenvalue vector $(p_1,\ldots,p_n)$ and a unitary eigenbasis $U$.
We introduce a new coordinate system on the eigenvalue vector alone:
the \emph{gap coordinates} $r_a:=p_a-p_{a+1}\ge0$, $a=1,\ldots,n-1$.
 We show that
$(r_1,\ldots,r_{n-1})$ are precisely the simple-root coordinates of the
eigenvalue vector in the Cartan subalgebra of $A_{n-1}=\mathfrak{sl}(n)$, so
that the spectral diagonal expands in the fundamental-coweight basis,
$\mathsf D(\mathbf r)=\sum_a r_a\,\omega_a^\vee$, and the $n-1$
eigenvalue-ordering inequalities collapse into the single linear constraint
$\sum_a a\,r_a\le1$ defining a weighted simplex $R_{n-1}$. This
identification yields closed-form expressions for the Fisher--Rao and Bures
metrics at the maximally mixed state in terms of the inverse Cartan matrix
of $A_{n-1}$, and for a piecewise-linear alternative to the  standard definition of purity.
On the angular side we
give an explicit Tilma--Sudarshan-type coordinatization of the flag manifold
$\mathcal F_n\simeq\mathrm{SU}(n)/\mathbb T^{n-1}$ and a global coherent-state
resolution of the identity on $\mathcal F_n$, enabling an
$\mathrm{SU}(n)$-covariant integral quantization of the flag variety. Under
GKLS dynamics, the pair (gap coordinates, flag angles) exhibits a partial
decoupling: the eigenvalue gaps evolve under dissipation alone, while the
flag angles are driven by both the Hamiltonian and the dissipator, which we
derive directly in these coordinates, including an explicit illustration on
the real qutrit sector. 
\end{abstract}

\maketitle

\noindent\textbf{MSC classes} 81P16, 81R05, 81R30, 81S22

\noindent\textbf{Keywords:} density matrix, spectral parameters, weight polytope, SU(n), flag manifold, coherent states, purity,  GKLS dynamics

\tableofcontents

\section{Introduction}
\label{intro}

Finite-dimensional (mixed) quantum states, represented by density matrices,
occupy a central place in the formalism of quantum mechanics and quantum optics.
Their geometric and metric content can reveal non-trivial aspects pertaining to
``quantum geometry''~\cite{ZS2003,BZ2006}.  These aspects develop at an
intricate level as soon as one deals with tensor products of states, with
accompanying physical concepts such as entanglement, fidelity, purity, and
quantum correlations.

From a mathematical viewpoint, the space $\mathcal{D}_n$ of density matrices on a
finite-dimensional Hilbert space $\mathbb{C}^n$ forms a compact convex body
embedded in the real vector space of Hermitian matrices with unit trace.  Its
extreme points correspond to pure states, while mixed states lie in the
interior.  This structure motivates the study of various parametrizations,
either linear (e.g.\ generalized Bloch vectors) or nonlinear (spectral
decompositions, group-theoretical constructions), which are useful for quantum
state reconstruction, numerical simulations, and the analysis of dynamical
processes.

When the quantum system interacts with an environment, its evolution is no
longer unitary and must be described within the framework of open quantum
systems.  In the Markovian regime the dynamics is governed by the
Gorini--Kossakowski--Lindblad--Sudarshan (GKLS)
equation~\cite{breupetr02,alilend07,rivas12,chruscinski-pascazio17}.  Although
this equation provides a complete characterization of completely positive
trace-preserving dynamical semigroups, its explicit analysis becomes cumbersome
as the system dimension increases, because the density matrix simultaneously
carries spectral information (eigenvalues) and geometric information (eigenvectors).
To illustrate this point, we recall the basic qubit case~\cite{macugako26}.

\subsubsection*{Qubits}

Figure~\ref{bloch} shows the unit Bloch ball in $\R^3$, with boundary the Bloch
sphere.  To a point $A$ in the ball, i.e.\ to the vector
$$
\overrightarrow{OA}\equiv\ba=r\,\mathbf{u}(\theta,\phi)
=(r\sin\theta\cos\phi,\,r\sin\theta\sin\phi,\,r\cos\theta),
\quad 0\le r\le1,
$$
there corresponds the $2\times2$ density matrix with its spectral decomposition
\begin{align}
\label{densmat11}
\rho_{r,\theta,\phi}
&=\frac{1+r}{2}|\mathbf{u}_{1}\rg\lg\mathbf{u}_{1}|
 +\frac{1-r}{2}|\mathbf{u}_{2}\rg\lg\mathbf{u}_{2}|
 =p_1|\mathbf{u}_{1}\rg\lg\mathbf{u}_{1}|+p_2|\mathbf{u}_{2}\rg\lg\mathbf{u}_{2}|,\\
\label{densmat12}
&=\frac{1}{2}\!\left(\un_2+r\,U(\theta,\phi)\sigma_3 U^{\dag}(\theta,\phi)\right),\\
\label{densmat13}
&=\tfrac12\!\left(\un_2+{\mathbf a}\cdot\boldsymbol\sigma\right),
\end{align}
with $p_1,p_2 \in [0,1]$, $p_1 + p_2=1$, 
\begin{equation}
\label{Un2}
U(\theta,\phi):=\begin{pmatrix}
  \cos\frac{\theta}{2} & -\sin\frac{\theta}{2}\eu^{-\ii\phi}\\
  \sin\frac{\theta}{2}\eu^{\ii\phi} & \cos\frac{\theta}{2}
\end{pmatrix},\quad\theta\in[0,\pi],\quad\phi\in[0,2\pi),
\end{equation}
and Pauli matrices
\begin{equation}
\label{pauli}
\boldsymbol\sigma=\left(\sigma_1=\begin{pmatrix}0&1\\1&0\end{pmatrix},\;
\sigma_2=\begin{pmatrix}0&-\ii\\\ii&0\end{pmatrix},\;
\sigma_3=\begin{pmatrix}1&0\\0&-1\end{pmatrix}\right).
\end{equation}
One checks that $U(\theta,\phi)=\eu^{-\ii\frac{\phi}{2}\sigma_3}
\eu^{-\ii\frac{\theta}{2}\sigma_2}\eu^{\ii\frac{\phi}{2}\sigma_3}$,
which is an element of the left coset $\mathrm{SU}(2)/\mathrm{U}(1)$.
Its column vectors
$$
\bu_1=\begin{pmatrix}\cos\frac{\theta}{2}\\\sin\frac{\theta}{2}\eu^{\ii\phi}\end{pmatrix}\equiv |\theta,\phi\rg,
\quad
\bu_2=\begin{pmatrix}-\sin\frac{\theta}{2}\eu^{-\ii\phi}\\\cos\frac{\theta}{2}\end{pmatrix} \equiv \eu^{-\ii\phi}|\pi-\theta,\phi\rg,
$$
form an orthonormal basis of $\C^2$ and can be viewed as spin-$\tfrac12$ coherent
states.  They satisfy the resolution of the identity
\begin{equation}
\label{resid2}
\int_{\mathbb{S}^2}\frac{\ud\bu}{2\pi}\,|\theta,\phi\rg\lg\theta,\phi|=\int_{\mathbb{S}^2}\frac{\ud\bu}{2\pi}\,|\pi-\theta,\phi\rg\lg\pi -\theta,\phi|=\un_2,
\quad\ud\bu=\sin\theta\,\ud\theta\,\ud\phi,
\end{equation}
and consequently,
\begin{equation}
\label{residrho}
\int_{\mathbb{S}^2}\rho_{r,\theta,\phi}\frac{\ud\bu}{2\pi}\,=\un_2.
\end{equation}
The Bloch radius $r=2p_1-1\in[0,1]$ is the normalized excess of the maximal
eigenvalue $p_1$ above the uniform baseline; $r=1$ gives pure states and $r=0$
gives the totally random state $\tfrac12\un_2$.

\begin{figure}[tbp]
\begin{center}
\tdplotsetmaincoords{62}{120}
\begin{tikzpicture}[tdplot_main_coords,scale=4]
  \def\R{1}
  \draw[->,thick](0,0,0)--(1.3,0,0) node[below left]{$x$};
  \draw[->,thick](0,0,0)--(0,1.3,0) node[right]{$y$};
  \draw[->,thick](0,0,0)--(0,0,1.3) node[above]{$z$};
  \begin{scope}[tdplot_screen_coords]
    \shade[ball color=blue!25,opacity=0.35](0,0) circle(\R);
    \draw[blue!60](0,0) circle(\R);
    \draw[densely dotted,gray!60](0,0) ellipse[x radius=\R,y radius=0.35*\R];
  \end{scope}
  \def\rpure{1.0}\def\thetap{35}\def\phip{20}
  \def\rmix{0.55}\def\thetam{60}\def\phim{120}
  \tdplotsetcoord{P}{\rpure}{\thetap}{\phip}
  \tdplotsetcoord{M}{\rmix}{\thetam}{\phim}
  \draw[very thin,gray!70,samples=90,domain=0:180,variable=\t]
    plot({\R*sin(\t)*cos(\phip)},{\R*sin(\t)*sin(\phip)},{\R*cos(\t)});
  \draw[very thin,gray!70,samples=90,domain=0:180,variable=\t]
    plot({\R*sin(\t)*cos(\phim)},{\R*sin(\t)*sin(\phim)},{\R*cos(\t)});
  \draw[very thick,-stealth,blue!70!black](0,0,0)--(P);
  \fill[blue!80!black](P) circle(0.7pt);
  \draw[very thick,-stealth,red!80!black](0,0,0)--(M);
  \fill[red!80!black](M) circle(0.7pt);
  \tdplotdrawarc[->,red!60,densely dashed]{(0,0,0)}{0.45}{0}{\phim}{anchor=north}{$\phi$}
  \draw[black!60,densely dashed,->,samples=40,domain=0:\thetam,variable=\t]
    plot({0.45*sin(\t)*cos(\phim)},{0.45*sin(\t)*sin(\phim)},{0.45*cos(\t)});
  \draw[densely dashed,gray!70](M)--(Mxy);
  \pgfmathsetmacro{\Mz}{\rmix*cos(\thetam)}
  \draw[densely dashed,gray!70](0,0,\Mz)--(M);
  \node[blue!80!black,anchor=south east,yshift=-4pt] at (P){pure $(r=1)$};
  \node[blue!80!black,anchor=south east,yshift=-15pt] at (P){$|\theta',\phi'\rg\lg\theta',\phi'|$};
  \node[red!80!black,anchor=north west,xshift=-10pt,yshift=-4pt] at (M){mixed $(r<1)$};
  \node[red!80!black,anchor=north west,yshift=10pt] at (M){$\rho_{r,\theta,\phi}$};
  \node[red!80!black,anchor=north west,xshift=-25pt,yshift=25pt] at (M){$\theta$};
\end{tikzpicture}
\caption{To each point $A$ in the unit closed (Bloch) ball corresponds the
  quantum state $\rho_{r,\theta,\phi}$.}
\label{bloch}
\end{center}
\end{figure}
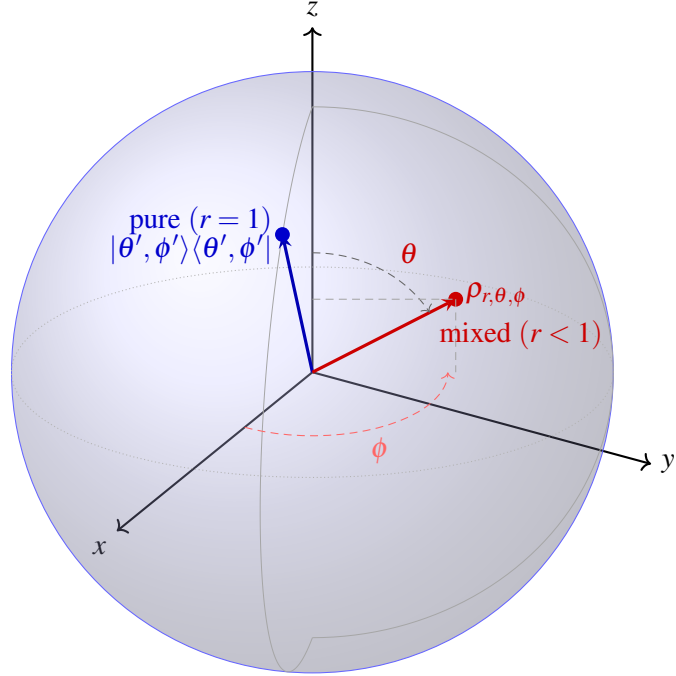

Under unitary Hamiltonian action coupled with Markovian environment interaction,
the qubit obeys the GKLS master equation ($\hbar=1$):
\begin{equation}
\label{lindblad2}
\frac{\ud\rho}{\ud t}
=-\ii[H,\rho]+\sum_{k}h_k\!\left(L_k\rho L_k^\dagger
-\tfrac12\{\rho,L_k^\dagger L_k\}\right)
\equiv\Liouv(\rho)=\Liouv_{\rm un}(\rho)+\Liouv_{\rm diss}(\rho).
\end{equation}
Choosing $L_k=\sigma_k$ ($k=x,y,z\equiv1,2,3$) and
$H=\begin{pmatrix}h_{00}&h_{01}\\\overline{h}_{01}&h_{11}\end{pmatrix}$,
the GKLS equation yields the following equations for the angular parts,
\begin{align}
\label{dphi}
\dot\phi
&=h_{00}-h_{11}-2\cot\theta\big(\Re h_{01}\cos\phi-\Im h_{01}\sin\phi\big)
+(h_2-h_1)\sin2\phi,\\
\label{dthet}
\dot\theta
&=-2(\Re h_{01}\sin\phi+\Im h_{01}\cos\phi)
+\sin2\theta\,(h_1\cos^2\phi+h_2\sin^2\phi-h_3),
\end{align}
and for the radial (spectral) part,
\begin{equation}
\label{dr}
\frac{\dot r}{r}
=-2\Big[h_1\big(1-\sin^2\theta\cos^2\phi\big)
+h_2\big(1-\sin^2\theta\sin^2\phi\big)
+h_3\sin^2\theta\Big]\le0.
\end{equation}
Equation~\eqref{dr} shows that the Bloch radius $r$ (a measure of purity) is
unaffected by the Hamiltonian and decreases under dissipation, with the exception of stationary points discussed in \cite{macugako26}.

\subsubsection*{Toward generalization}

The aim of this work is to generalize the qubit picture to any finite
dimension, emphasizing the separation between parameters describing the spectrum
of a state $\rho\in \mathcal D_n$ and those describing its orientation in Hilbert space.  The occasionally elementary character of the presentation is intentional, reflecting our aim to combine a self-contained exposition with a pedagogical introduction to the main ideas.

Parametrizing $\mathcal D_n$
efficiently --- separating the ``shape'' of a state (its eigenvalues) from
its ``orientation'' (its eigenbasis) --- is a classical problem with a long
history: Bloch-vector expansions in an $\mathfrak{su}(n)$ generator
basis~\cite{BZ2006}, generalized Euler-angle parametrizations of
$\mathrm{SU}(N)$~\cite{TilmaSudarshan2002,BertiniCacciatoriCerchiai2006},
coset parametrizations~\cite{akhtarshenas07,akhtarshenas07bures}, and the
recursive Jarlskog parametrization~\cite{jarlskog05,bruning-etal11} all
instantiate, in one way or another, the spectral decomposition
\begin{equation}
\label{eq:generic-split}
\rho = U\,\diag(p_1,\ldots,p_n)\,U^\dagger,
\qquad
p_1\ge\cdots\ge p_n\ge0,\quad \textstyle\sum_k p_k=1,
\end{equation}
with $U$ ranging over a coset of $\mathrm{U}(n)$ that leaves the diagonal
matrix stable. When the system is open and Markovian, governed by the
Gorini--Kossakowski--Lindblad--Sudarshan (GKLS)
equation~\cite{breupetr02,alilend07,rivas12,chruscinski-pascazio17}, this
same splitting underlies the separation of the dynamics into an
\emph{inter-orbit} flow of the eigenvalue vector and an \emph{intra-orbit}
flow along the corresponding flag manifold, a viewpoint developed in detail
by Rooney, Bloch and Rangan for Lindblad control
systems~\cite{rooney-bloch-rangan16} and for the asymptotic/stationary
structure of such
systems~\cite{rooney-bloch-rangan18}.

None of this is in dispute. What this paper contributes is a specific
coordinate choice on the eigenvalue factor of~\eqref{eq:generic-split} ---
the \emph{gap coordinates} $r_a=p_a-p_{a+1}$ --- together with the
observation that this choice is not an arbitrary reparametrization of the
ordered simplex but coincides with the simple-root (fundamental-coweight)
coordinates of $A_{n-1}=\mathfrak{sl}(n)$. This single fact drives every
subsequent result: a one-inequality spectral domain, closed-form
Cartan-matrix expressions for the Fisher--Rao and Bures metrics near the
maximally mixed state, a piecewise-linear purity functional, and an explicit
decoupling of GKLS dynamics into a closed dissipation-only flow for $\mathbf
r$ and a Hamiltonian-and-dissipation-driven flow on the flag manifold. 
\medskip

The paper is organized as follows. 
Section~\ref{sec:statespace} introduces the space of density matrices 
$\rho_{\br,\vph}$ on $\mathbb{C}^n$ and presents the geometric structure 
of the spectral parametrization $\br$ in terms of a convex polytope 
$R_{n-1}\subset\mathbb{R}^{n-1}$; the cases $n=3$ and $n=4$ are worked 
out explicitly to illustrate this framework. 
Section~\ref{statcomp} examines the statistical properties of the spectral 
parameters $\br$ and contrasts them with those of the standard probability 
parameters. 
Section \ref{discint} is an intermediate discussion about the advantages of our $\br$ parametrisation with regard to the ordered probability simplex forming the set of eigenvalues of $\rho_{\br,\vph}$. 
Section~\ref{sec:purity} proposes an alternative notion of purity formulated 
directly in terms of $\br$. 
Section~\ref{SUPar} develops the angular parametrization $\vph$ of $\rho_{\br,\vph}$
via the flag manifold 
$\mathcal{F}_n=\mathrm{SU}(n)/\mathbb{T}^{n-1}$,  with particular emphasis on the Perelomov coherent state 
interpretation, the associated resolution of the identity, and the resulting 
$\mathrm{SU}(n)$-covariant integral quantization. The case $n=3$ is treated 
in detail. 
Section~\ref{sec:gkls} applies the full parametrization to GKLS dynamics 
and demonstrates how the evolution equations decouple into independent 
spectral and angular sectors. A summary of the real case $n=3$  is given as an illustration. 
We
state the precise relation of each of our results to the  papers \cite{akhtarshenas07,akhtarshenas07bures,jarlskog05,bruning-etal11} cited
above in Section~\ref{sec:related}, by insisting that  we do not repeat, as new claims,
material that already appears there.

Concluding remarks and perspectives are collected in 
Section~\ref{sec:conclusion}.
\section{Pure and mixed quantum states in \texorpdfstring{$n$}{n}-dimensional Hilbert space}
\label{sec:statespace}

A vector in $\C^n$ is denoted $\bv$ (Euclidean/Hermitian geometry) or $|\bv\rg$
(Dirac notation).  A mixed state is represented by a density matrix
\begin{equation}
\label{nrho}
\rho\equiv\rho_{\br,\vph}
=\sum_{i=1}^n p_i\,\Pr_{\bu_i}
\equiv\frac{1}{n}\un_n+U(\vph)\,\sfD(\br)\,U^{\dag}(\vph)\equiv U(\vph)\,\rho_\br\,U^{\dag}(\vph),
\end{equation}
where $0\le p_i\le1$ with $\sum_{i=1}^n p_i=1$ are the eigenvalues of $\rho$,
and $\Pr_{\bu_i}=|\bu_i\rg\lg\bu_i|$ are the corresponding rank-one projectors.
The diagonal matrix $\frac{1}{n}\,\un_n \equiv \rho_{\mathrm{rm}}$ represents the
maximally mixed state. The diagonal matrix $\sfD(\br)$ depends linearly on an $(n-1)$-tuple
$\br = (r_1,r_2,\dotsc,r_{n-1})$, whose definition will be specified below. 

If $\rho_{\br,\vph}$ is non-degenerate (simple spectrum), the number of
distinct real parameters is exactly $n^2-1$.  The set of degenerate density
matrices has measure zero in the full state space.

\subsection{Spectral parametrization \texorpdfstring{$\br$}{r} in gap coordinates}
\label{sec:param}


Given an ordered probability distribution
\[
p_1\ge p_2\ge\cdots\ge p_n\ge0,
\qquad
\sum_{i=1}^n p_i=1,
\]
we define the \emph{spectral gap coordinates} by
\begin{equation}
\label{def:r}
r_a := p_a-p_{a+1}\ge0,
\qquad a=1,\ldots,n-1.
\end{equation}
These variables represent the successive gaps between adjacent eigenvalues.
For \(n=2\), one has \(r_1=p_1-p_2\), which coincides with the Bloch-ball radius; \eqref{def:r} may therefore be viewed as its natural \(n\)-level generalization.

\paragraph{Inverse relations.}
The eigenvalues are recovered from \(\br=(r_1,\ldots,r_{n-1})\) through
\begin{equation}
\label{eq:p_from_r}
p_k
=
\frac{1}{n}
+\sum_{a=k}^{n-1}\left(1-\frac{a}{n}\right)r_a
-\sum_{a=1}^{k-1}\frac{a}{n}\,r_a,
\qquad k=1,\ldots,n,
\end{equation}
with the convention that empty sums vanish. Equivalently,
\begin{equation}
\label{eq:p_from_r_compact}
p_k-\frac{1}{n}
=
\sum_{a=1}^{n-1} M_{ka}\,r_a,
\qquad
M_{ka}:=
\begin{cases}
1-\dfrac{a}{n}, & k\le a, \\[6pt]
-\dfrac{a}{n}, & k>a.
\end{cases}
\end{equation}
For each fixed \(a\), one has \(\sum_{k=1}^n M_{ka}=0\), consistently with the normalization condition \(\sum_{k=1}^n p_k=1\).
\paragraph{Eigenvalue polytope.}
The non-negativity constraints $p_k\ge0$ are automatic for
$k=1,\ldots,n-1$ once $r_a\ge0$, since $p_k\ge p_{k+1}\ge\cdots\ge p_n$.
The sole remaining constraint is $p_n\ge0$:
$$
p_n = \frac{1}{n}\!\left(1-\sum_{a=1}^{n-1}a\,r_a\right)\ge0.
$$
Hence the admissible parameter domain is the \emph{weighted simplex}
\begin{equation}
\label{def:polR}
R_{n-1} :=
\left\{\br\in\mathbb{R}^{n-1} :
r_a\ge0\ \forall a, \quad
\sum_{a=1}^{n-1}a\,r_a\le 1
\right\},
\end{equation}
whose vertices are $\br=\mathbf{0}$ (maximally mixed state) and
$\br=\mathbf{e}_a/a$ for $a=1,\ldots,n-1$ (boundary states with $p_1=\cdots=p_a=\frac{1}{a}$ and 
$p_{a+1}=\cdots=p_n=0$), and where $(\mathbf{e}_a)$ denotes  the canonical orthonormal basis of $\R^{n-1}$. The pure-state condition
$p_1=1$, $p_2=\cdots=p_n=0$ corresponds to $r_a = \delta_{1a}$.


\subsection{Relation to the spectral simplex and volume comparisons}

The standard $(n-1)$-dimensional probability simplex is
\[
\Delta_{n-1}:=\{(p_1,\ldots,p_{n-1})\in\R^{n-1}:p_i\ge0,\,\textstyle\sum_{i=1}^{n-1}p_i\le1\},
\]
with $p_n:=1-\sum_{i=1}^{n-1}p_i$.  The ordered simplex (Weyl chamber) is
\begin{equation}
\label{eq:ordered-simplex}
\Delta_{n-1}^{\downarrow}
:=\{\mathbf{p}\in\Delta_{n-1}:p_1\ge p_2\ge\cdots\ge p_{n-1}\ge p_n\ge0\}.
\end{equation}
Its volume is given by  
\begin{prop}
\begin{equation}
\label{Deln1Vol}
\Vol_{n-1}(\Delta_{n-1}^{\downarrow})=\frac{1}{n!\,(n-1)!}.
\end{equation}
\end{prop}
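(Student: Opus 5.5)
Two routes give the same number, and the plan is to run one and use the other as a check.

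\emph{Symmetry route.} The volume of $\Delta_{n-1}$, measured in the coordinates $(p_1,\ldots,p_{n-1})$, is the standard simplex volume $1/(n-1)!$. The symmetric group $S_n$ acts on $\mathbf p=(p_1,\ldots,p_n)$ by permuting coordinates. In the coordinates $(p_1,\ldots,p_{n-1})$, with $p_n=1-\sum_{i<n}p_i$, each permutation becomes an affine map of determinant $\pm1$. This is clear for transpositions among $p_1,\ldots,p_{n-1}$. For a transposition involving $p_n$ the linear part is the identity with one row replaced by $(-1,\ldots,-1)$, which still has determinant $-1$.

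The $n!$ images of $\Delta_{n-1}^{\downarrow}$ cover $\Delta_{n-1}$, since every $\mathbf p$ can be sorted. Their pairwise overlaps lie in the hyperplanes $p_i=p_j$, which have measure zero. Each image has the same volume as $\Delta_{n-1}^{\downarrow}$, so the volume of $\Delta_{n-1}^{\downarrow}$ is $\frac{1}{n!}\cdot\frac{1}{(n-1)!}$. The only point needing care is this unimodularity check for permutations that move the dependent coordinate $p_n$.

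\emph{Gap-coordinate route.} Use the change of variables \eqref{eq:p_from_r}: $\Delta_{n-1}^{\downarrow}$ is the image of the weighted simplex $R_{n-1}$ of \eqref{def:polR}.

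1. Compute the volume of $R_{n-1}$. It is a simplex with vertices $\mathbf 0$ and $\mathbf e_a/a$, so its volume is
\[
\frac{1}{(n-1)!}\prod_{a=1}^{n-1}\frac1a=\frac{1}{\big((n-1)!\big)^2}.
\]

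2. Compute the Jacobian of $\br\mapsto(p_1,\ldots,p_{n-1})$, that is, the determinant of the $(n-1)\times(n-1)$ block of $M$ from \eqref{eq:p_from_r_compact} with $k\le n-1$. The inverse map is $p_a-p_{a+1}=r_a$, with $p_n$ expressed through $p_1,\ldots,p_{n-1}$. Its linear part is bidiagonal with the last row modified to $(1,\ldots,1,2)$. Eliminating recursively gives determinant $n$, so the forward Jacobian is $1/n$.

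3. Multiply: $\frac1n\cdot\frac{1}{((n-1)!)^2}=\frac{1}{n!\,(n-1)!}$, in agreement with the symmetry route.

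The determinant in step 2 is the main place for a possible slip. I would double-check it for $n=2$: the map $r_1\mapsto p_1=\tfrac12+\tfrac{r_1}{2}$ has Jacobian $\tfrac12$, as the formula predicts.
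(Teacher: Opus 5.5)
Your proposal is correct, and your symmetry route is the paper's own argument: $\Delta_{n-1}$ is split, up to measure zero, into $n!$ permuted copies of $\Delta_{n-1}^{\downarrow}$. Your determinant check for permutations that move $p_n$ makes the paper's word ``congruent'' precise in the chart $(p_1,\ldots,p_{n-1})$, where those maps preserve volume but are not isometries. Your gap-coordinate check is also sound (column operations $C_j\leftarrow C_j+C_{j-1}$ reduce the inverse map to a lower-triangular matrix with diagonal $(1,\ldots,1,n)$) and is just the paper's alternative derivation of $\Vol_{n-1}(R_{n-1})$ run in reverse; it is not circular, since you compute $\Vol_{n-1}(R_{n-1})$ directly from its vertices.
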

\begin{proof}
Setting $p_n:=1-\sum_{i=1}^{n-1}p_i$, the region $\Delta_{n-1}^{\downarrow}$
is the subset of the standard simplex $\Delta_{n-1}$ (which has volume $1/(n-1)!$)
for which $p_1\ge p_2\ge\cdots\ge p_n\ge0$.  By symmetry, the simplex is
partitioned (up to measure zero) into $n!$ congruent ordered regions, so
$\Vol_{n-1}(\Delta_{n-1}^{\downarrow})=\frac{1}{n!}\cdot\frac{1}{(n-1)!}$.
\end{proof}
\begin{prop}
The gap map $\Psi\colon(p_1,\ldots,p_n)\mapsto(r_1,\ldots,r_{n-1})$,
$r_a=p_a-p_{a+1}$, is a bijection between
$\Delta_{n-1}^{\downarrow}$ and the weighted simplex $R_{n-1}$
defined in~\eqref{def:polR}.
\end{prop}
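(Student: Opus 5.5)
The plan is to treat $\Psi$ as the restriction of a linear map and to write down its inverse explicitly. I will regard $\Delta_{n-1}^{\downarrow}$ as a subset of the affine hyperplane $\{\sum_k p_k=1\}\subset\R^n$, identified with $\R^{n-1}$ through $(p_1,\ldots,p_{n-1})$ and $p_n:=1-\sum_{i<n}p_i$. On this hyperplane, $\Psi$ is the affine map $\mathbf p\mapsto(p_a-p_{a+1})_{a=1}^{n-1}$. Its candidate inverse $\Phi\colon\br\mapsto\mathbf p$ is given by the inverse relations~\eqref{eq:p_from_r}--\eqref{eq:p_from_r_compact}.

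\textbf{Step 1: $\Phi$ is a two-sided inverse of $\Psi$ on the whole hyperplane.}
\begin{itemize}
\item[(a)] From~\eqref{eq:p_from_r_compact}, $\sum_k M_{ka}=0$, so $\sum_k p_k=1$ and $\Phi(\br)$ lies on the hyperplane.
\item[(b)] For $\Psi\circ\Phi=\mathrm{id}$, I compute $p_k-p_{k+1}=\sum_a (M_{ka}-M_{k+1,a})r_a$. The difference $M_{ka}-M_{k+1,a}$ is nonzero only when $k\le a<k+1$, i.e.\ $a=k$, where it equals $(1-\tfrac{k}{n})-(-\tfrac{k}{n})=1$. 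Hence $p_k-p_{k+1}=r_k$.
\item[(c)] For $\Phi\circ\Psi=\mathrm{id}$, note that a point of the hyperplane is determined by its consecutive differences together with $\sum_k p_k=1$. Indeed, $p_k=p_n+\sum_{a\ge k}r_a$, and summing over $k$ gives $1=np_n+\sum_a a\,r_a$, which fixes $p_n$ uniquely. So $\Psi$ is injective on the hyperplane, and with (b) this gives $\Phi\circ\Psi=\mathrm{id}$.
\end{itemize}

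\textbf{Step 2: $\Psi$ maps $\Delta_{n-1}^{\downarrow}$ exactly onto $R_{n-1}$.}
\begin{itemize}
\item[(a)] Inclusion $\Psi(\Delta_{n-1}^{\downarrow})\subset R_{n-1}$. The ordering $p_a\ge p_{a+1}$ gives $r_a\ge0$. The identity $p_n=\tfrac1n\bigl(1-\sum_a a\,r_a\bigr)$ from Step 1(c), combined with $p_n\ge0$, gives $\sum_a a\,r_a\le1$.
\item[(b)] Reverse inclusion $\Phi(R_{n-1})\subset\Delta_{n-1}^{\downarrow}$. For $\br\in R_{n-1}$, the same formula gives $p_n\ge0$, and $r_a\ge0$ gives $p_1\ge\cdots\ge p_n$. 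It follows that every $p_k\ge p_n\ge0$, and together with $\sum_k p_k=1$ this places $\Phi(\br)$ in $\Delta_{n-1}^{\downarrow}$. This is exactly the eigenvalue-polytope observation already made in the paper.
\end{itemize}

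Combining Steps 1 and 2, $\Psi$ restricted to $\Delta_{n-1}^{\downarrow}$ is a bijection onto $R_{n-1}$, with inverse $\Phi$.

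There is no genuine obstacle here. The only step needing care is the index bookkeeping in Step 1(b), namely checking that the piecewise definition of $M_{ka}$ telescopes to $\delta_{ka}$. As a by-product, the linear part of $\Psi$ is given by a unimodular bidiagonal-type matrix, which would be useful for the subsequent volume comparison.
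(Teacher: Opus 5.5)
Your proof is correct and follows the paper's route. The paper also uses that $\Psi$ is linear with inverse \eqref{eq:p_from_r}, that the ordering becomes $r_a\ge0$, and that $p_n=\frac1n\bigl(1-\sum_a a\,r_a\bigr)\ge0$ becomes the single weighted constraint; you additionally spell out the two-sided inverse and both inclusions, which the paper leaves implicit.

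One correction to your closing aside. In the chart $(p_1,\ldots,p_{n-1})$ used to measure $\Delta_{n-1}^{\downarrow}$, the linear part of $\Psi$ has last row $(1,\ldots,1,2)$ and determinant $n$, not $\pm1$. That factor $n$ is exactly the volume ratio \eqref{compVolR}. Only the extended map $\mathbf p\mapsto(\br,p_n)$ on $\mathbb{R}^n$ is unimodular bidiagonal, so it would not give the volume comparison.
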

\begin{proof}
The map $\Psi$ is linear with inverse given by~\eqref{eq:p_from_r}.
The ordering $p_1\ge\cdots\ge p_n$ translates to $r_a\ge0$, and the
normalisation $\sum_k p_k=1$ together with $p_n\ge0$ yields
$\sum_{a=1}^{n-1}a\,r_a\le1$; hence $\Psi(\Delta_{n-1}^{\downarrow})=R_{n-1}$.
\end{proof}

The Euclidean volume of $R_{n-1}$ is:
\begin{equation}
\label{Rn1Vol}
\Vol_{n-1}(R_{n-1})
=\frac{1}{\bigl((n-1)!\bigr)^2},
\end{equation}
and so the ratio of volumes weighted/ordered simplexes is
\begin{equation}
\label{compVolR}
\frac{\Vol_{n-1}(R_{n-1})}{\Vol_{n-1}(\Delta_{n-1}^{\downarrow})}
=n,
\end{equation}
\begin{proof}[Proof of~\eqref{Rn1Vol}]
The set $R_{n-1}$ is a simplex with $n$ vertices $\mathbf{0}$ and
$\mathbf{e}_a/a$ for $a=1,\ldots,n-1$.  Taking $\mathbf{0}$ as base
point, the edge matrix is
$M=\operatorname{diag}\left(1,\tfrac{1}{2},\ldots,\tfrac{1}{n-1}\right)$,
so
$$
\Vol_{n-1}(R_{n-1})
=\frac{|\det M|}{(n-1)!}
=\frac{1}{(n-1)!}\cdot\frac{1}{(n-1)!}
=\frac{1}{\bigl((n-1)!\bigr)^2}.
$$
Alternatively, $\Psi$ has constant Jacobian $|\det\mathrm{d}\Psi|=n$
on the hyperplane $\sum_k p_k=1$ (verified by direct expansion for any $n$),
which gives $\Vol(R_{n-1})=n\cdot\Vol(\Delta_{n-1}^{\downarrow})
=n/(n!\,(n-1)!)=1/((n-1)!)^2$.
\end{proof}

\begin{remark}[Geometric meaning of $R_{n-1}$]
The weighted simplex $R_{n-1}$ is the image of the ordered probability
simplex under the \emph{linear} gap map $\Psi$.  Its vertices
$\mathbf{0}$ and $\mathbf{e}_a/a$ correspond, respectively, to the
maximally mixed state ($p_k=1/n$ for all $k$) and to the boundary
states with $p_1=\cdots=p_a=1/a$, $p_{a+1}=\cdots=p_n=0$.
The pure state $p_1=1$, $p_k=0$ $(k>1)$ is the vertex $\mathbf{e}_1=(1,0,\ldots,0)$,
the farthest from the origin.  Since all edges of $R_{n-1}$ originate
from $\mathbf{0}$, the polytope is a \emph{coordinate simplex} stretched
anisotropically: edge $a$ has length $1/a$, so higher-index gaps are
geometrically compressed.
\end{remark}

\subsection{Geometric illustrations for \texorpdfstring{$n=3$}{n=3}
  and \texorpdfstring{$n=4$}{n=4}}

\subsubsection*{Simplex \texorpdfstring{$R_2$}{R2} for \texorpdfstring{$n=3$}{n=3}}

The constraints~\eqref{def:polR} reduce to
$$
r_1\ge0,\quad r_2\ge0,\quad r_1+2r_2\le1.
$$
The polytope $R_2$ is a triangle with vertices $(0,0)$, $(1,0)$,
$(0,\tfrac12)$.
Its area is $\Vol_2(R_2)=\tfrac14$, and the ratio $\Vol(R_2)/\Vol(\Delta_2^{\downarrow})=3$.
For comparison, it is shown  in Fig.~\ref{fig:R2Delta2} together with the ordered probability simplex $\Delta_2^{\downarrow}$.

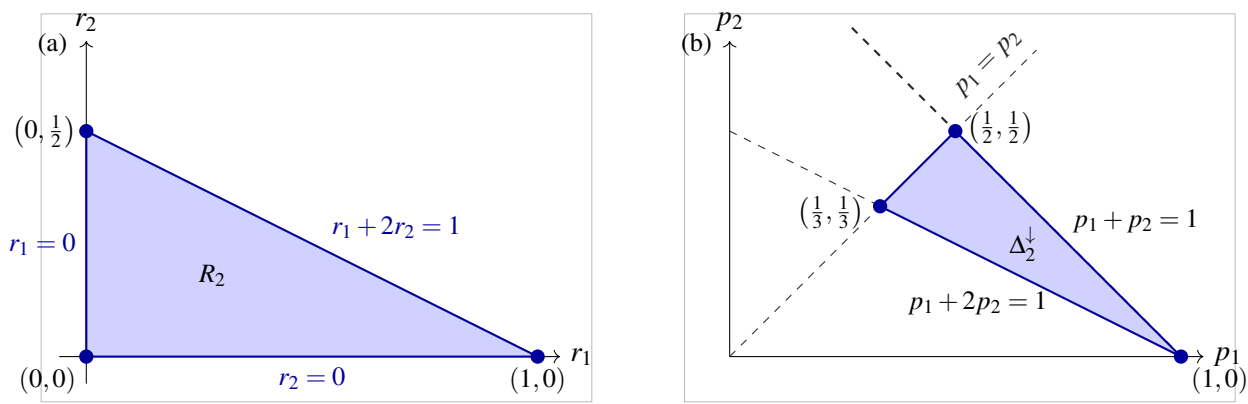
\begin{figure}[htb!]
\centering

\begin{minipage}[t]{0.48\textwidth}
\centering
\begin{tikzpicture}[scale=6]
  \path[use as bounding box] (-0.12,-0.12) rectangle (1.16,0.80);

  \draw[black!25] (-0.10,-0.10) rectangle (1.12,0.76);

  \node[anchor=north west,font=\footnotesize] at (-0.13,0.74) {(a)};

  \draw[->] (-0.06,0) -- (1.05,0) node[right] {\small $r_1$};
  \draw[->] (0,-0.06) -- (0,0.70) node[above] {\small $r_2$};

  \fill[blue!18] (0,0) -- (1,0) -- (0,0.5) -- cycle;

  \draw[thick,blue!60!black] (0,0) -- (1,0)
    node[pos=0.5,below,font=\footnotesize] {$r_2=0$};
  \draw[thick,blue!60!black] (0,0) -- (0,0.5)
    node[pos=0.5,left,font=\footnotesize] {$r_1=0$};
  \draw[thick,blue!60!black] (1,0) -- (0,0.5)
    node[pos=0.48,above right,font=\footnotesize] {$r_1+2r_2=1$};

  \foreach \x/\y in {0/0,1/0,0/0.5}
    \fill[blue!60!black] (\x,\y) circle (0.45pt);

  \node[font=\footnotesize,below left] at (0,0) {$(0,0)$};
  \node[font=\footnotesize,below]      at (1,0) {$(1,0)$};
  \node[font=\footnotesize,left]       at (0,0.5) {$\bigl(0,\tfrac12\bigr)$};

  \node[font=\footnotesize] at (0.28,0.18) {$R_2$};
\end{tikzpicture}
\end{minipage}
\hfill
\begin{minipage}[t]{0.48\textwidth}
\centering
\begin{tikzpicture}[scale=6]
  \path[use as bounding box] (-0.12,-0.12) rectangle (1.16,0.80);

  \draw[black!25] (-0.10,-0.10) rectangle (1.12,0.76);

  \node[anchor=north west,font=\footnotesize] at (-0.13,0.74) {(b)};

  \draw[->] (0,0) -- (1.05,0) node[right] {\small $p_1$};
  \draw[->] (0,0) -- (0,0.70) node[above] {\small $p_2$};

  \draw[black!85,dashed] (0,0) -- (0.68,0.68)
node[pos=0.90,sloped,anchor=south,yshift=2pt,font=\footnotesize] {$p_1=p_2$};
 
\draw[black!85,dashed] (0,0.5) -- (1,0);
\node[font=\footnotesize] at (0.55,0.12) {$p_1+2p_2=1$};
  
  \draw[black!85,dashed,thick,shorten <=5pt] (1/4,3/4) -- (1,0);
\node[font=\footnotesize] at (0.9,0.3) {$p_1+p_2=1$};
  
  \fill[blue!18] (1,0) -- (0.5,0.5) -- (1/3,1/3) -- cycle;
  \draw[thick,blue!60!black] (1,0) -- (0.5,0.5) -- (1/3,1/3) -- cycle;

  \fill[blue!60!black] (1,0)     circle (0.45pt);
  \fill[blue!60!black] (0.5,0.5) circle (0.45pt);
  \fill[blue!60!black] (1/3,1/3) circle (0.45pt);

  \node[font=\footnotesize,below right] at (1,0) {$(1,0)$};
  \node[font=\footnotesize,above]       at (0.6,0.45) {$\bigl(\tfrac12,\tfrac12\bigr)$};
  \node[font=\footnotesize,left]        at (0.32,0.32) {$\bigl(\tfrac13,\tfrac13\bigr)$};

  \node[font=\footnotesize] at (0.65,0.25) {$\Delta_2^{\downarrow}$};
\end{tikzpicture}
\end{minipage}

\caption{Comparison for $n=3$: (a) the weighted simplex $R_2$ in $(r_1,r_2)$ and (b) the ordered probability simplex $\Delta_2^{\downarrow}$ in $(p_1,p_2)$. The two domains are related by the linear gap map.}
\label{fig:R2Delta2}
\end{figure}

\subsubsection*{Simplex \texorpdfstring{$R_3$}{R3} for \texorpdfstring{$n=4$}{n=4}}

The constraints reduce to
$$
r_1\ge0,\quad r_2\ge0,\quad r_3\ge0,\quad r_1+2r_2+3r_3\le1.
$$
The polytope $R_3\subset\mathbb{R}^3$ is a \emph{tetrahedron} with four vertices
$$
V_1=(0,0,0),\quad
V_2=(1,0,0),\quad
V_3=\bigl(0,\tfrac{1}{2},0\bigr),\quad
V_4=\bigl(0,0,\tfrac{1}{3}\bigr),
$$
four triangular facets
$$
F_1\colon r_1=0,\quad
F_2\colon r_2=0,\quad
F_3\colon r_3=0,\quad
F_4\colon r_1+2r_2+3r_3=1,
$$
and volume $\Vol_3(R_3)=\tfrac{1}{36}$.
The tetrahedron is shown in Fig.~\ref{fig:R3}.


\begin{figure}[tbp!]
\centering
\begin{tikzpicture}[
  x={(-2.8cm,-0.80cm)},
  y={( 4.2cm,-0.80cm)},
  z={( 0.0cm, 4.2cm)},
  vertex/.style={circle, fill=black, inner sep=1.1pt},
  tick/.style={font=\footnotesize}
]

\draw[->, thick] (0,0,0) -- (1.2,0,0);
\draw[->, thick] (0,0,0) -- (0,0.7,0);
\draw[->, thick] (0,0,0) -- (0,0,0.6);

\node[tick, anchor=north] at (0,0,0) {$0$};
\node[tick, anchor=north] at (1,0,0) {$1$};
\node[tick, anchor=north east] at (0,0.6,0) {$\frac12$};
\node[tick, anchor=west] at (0,0,0.4) {$\frac13$};

\coordinate (V1) at (0,0,0);
\coordinate (V2) at (1,0,0);
\coordinate (V3) at (0,0.5,0);
\coordinate (V4) at (0,0,1/3);

\colorlet{facecolor}{cyan!55!blue}

\fill[facecolor, fill opacity=0.25, draw=black!45, line width=0.4pt]
  (V1)--(V3)--(V4)--cycle;

\fill[facecolor, fill opacity=0.30, draw=black!45, line width=0.4pt]
  (V1)--(V2)--(V3)--cycle;

\fill[facecolor, fill opacity=0.30, draw=black!45, line width=0.4pt]
  (V1)--(V2)--(V4)--cycle;

\fill[facecolor, fill opacity=0.55, draw=black!65, line width=0.5pt]
  (V2)--(V3)--(V4)--cycle;

\draw[black, line width=1pt]
  (V1)--(V2) (V1)--(V3) (V1)--(V4)
  (V2)--(V3) (V2)--(V4) (V3)--(V4);

\node[vertex] at (V1) {};
\node[vertex] at (V2) {};
\node[vertex] at (V3) {};
\node[vertex] at (V4) {};

\node[font=\normalsize, anchor=north] at (1.25,0,0) {$r_1$};
\node[font=\normalsize, anchor=north west] at (0,0.7,0) {$r_2$};
\node[font=\normalsize, anchor=south] at (0,0,0.6) {$r_3$};

\end{tikzpicture}
\caption{Weighted simplex $R_3 \subset \mathbb{R}^3_{\ge 0}$ for $n=4$,
shown with explicit coordinate axes. The vertices are
$(0,0,0)$, $(1,0,0)$, $(0,\tfrac12,0)$, and $(0,0,\tfrac13)$,
and the slanted face corresponds to the constraint
$r_1+2r_2+3r_3=1$.}
\label{fig:R3}
\end{figure}
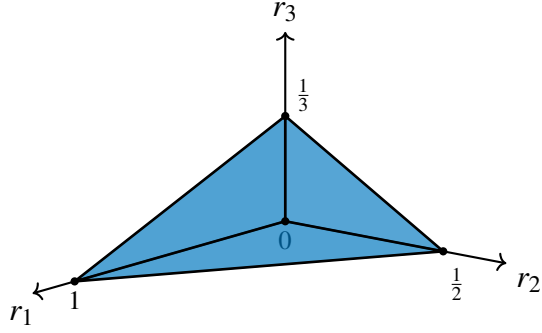

\subsection{Lie-algebraic structure: fundamental coweights}
\label{sec:Lie}

The Jacobian matrix $M_{ka}$ in~\eqref{eq:p_from_r_compact} has a
canonical Lie-algebraic interpretation  \cite{humphreys1972}.

\begin{prop}[Fundamental coweights of $\mathfrak{sl}(n)$ as spectral basis]
\label{prop:fundcoweights}
Define, for $a=1,\ldots,n-1$, the diagonal traceless $n\times n$ matrices: 
\begin{equation}
\label{eq:fund_coweight}
\omega_a^\vee
:= \frac{1}{n}
\diag\!\bigl(\underbrace{n-a,\ldots,n-a}_{a},
            \underbrace{-a,\ldots,-a}_{n-a}\bigr).
\end{equation}
They obey the following properties.
\begin{enumerate}
\item \textbf{Basis of the Cartan subalgebra.}
$\{\omega_1^\vee,\ldots,\omega_{n-1}^\vee\}$ is a basis of
\[
\mathfrak{h}
=\{\diag(d_1,\ldots,d_n):\textstyle\sum_k d_k=0\}
\subset\mathfrak{sl}(n).
\]

\item \textbf{Fundamental coweights.}
The simple roots $\alpha_j$, defined by
$\alpha_j(\diag(d_1,\ldots,d_n)):=d_j-d_{j+1}$, satisfy
\begin{equation}
\label{eq:coweight_def}
\alpha_j(\omega_a^\vee) = \delta_{ja},
\qquad 1\le j,a\le n-1.
\end{equation}
Hence the $\omega_a^\vee$ are precisely the \emph{fundamental
coweights} of $\mathfrak{sl}(n)$.

\item \textbf{Adjoint action on root vectors.}
\begin{equation}
\label{eq:adj_coweight}
[\omega_a^\vee,\,E_{j,j+1}] = \delta_{aj}\,E_{j,j+1}, \qquad 1\le j\le n-1, \qquad \left(E_{ij}\right)_{kl}=\delta_{ik}\,\delta_{jl}.
\end{equation}
\item \textbf{Change of basis to simple coroots.}
With $H_j:=E_{jj}-E_{j+1,j+1}$ the simple-coroot basis,
\begin{align}
\label{eq:omega_from_H}
\omega_a^\vee &= \sum_{j=1}^{n-1}
  \frac{\min(a,j)\bigl(n-\max(a,j)\bigr)}{n}\,H_j,\\
\label{eq:H_from_omega}
H_j &= \sum_{a=1}^{n-1}C_{ja}\,\omega_a^\vee,
\end{align}
where $C_{ja}=2\delta_{ja}-\delta_{j,a+1}-\delta_{j,a-1}$ is the
Cartan matrix of type $A_{n-1}$.

\item \textbf{Jacobian identity.}
The $(k,a)$ entry of the Jacobian satisfies
\begin{equation}
\label{eq:Jacobian_coweight}
\frac{\partial p_k}{\partial r_a} = M_{ka} = (\omega_a^\vee)_{kk}.
\end{equation}

\item \textbf{Compact real form.}
$\{\mathrm{i}\,\omega_a^\vee : 1\le a\le n-1\}$ is a Cartan subalgebra
basis of $\mathfrak{su}(n)$.
\end{enumerate}
\end{prop}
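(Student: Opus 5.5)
The plan is to prove item 2 first by direct evaluation, and then derive most of the other items from it. By \eqref{eq:fund_coweight} the diagonal entries of $\omega_a^\vee$ are $(\omega_a^\vee)_{kk}=1-a/n$ for $k\le a$ and $-a/n$ for $k>a$. Hence $\alpha_j(\omega_a^\vee)=(\omega_a^\vee)_{jj}-(\omega_a^\vee)_{j+1,j+1}$. This difference equals $1$ exactly when $j\le a<j+1$, i.e.\ $j=a$, and vanishes otherwise, since then both indices lie on the same side of the block boundary. Tracelessness is immediate: $a(n-a)-(n-a)a=0$.

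For item 1, the simple roots $\alpha_1,\ldots,\alpha_{n-1}$ are linearly independent functionals on the $(n-1)$-dimensional space $\mathfrak h$. Indeed, if $d_j=d_{j+1}$ for all $j$ and $\sum_k d_k=0$, then $d=0$. The duality relation \eqref{eq:coweight_def} then forces the $\omega_a^\vee$ to be linearly independent: applying $\alpha_j$ to any vanishing combination $\sum_a c_a\omega_a^\vee=0$ gives $c_j=0$. There are $n-1$ of them, so they form a basis of $\mathfrak h$.

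For item 3, I will use $[D,E_{j,j+1}]=(d_j-d_{j+1})E_{j,j+1}$ for diagonal $D$, which gives $[\omega_a^\vee,E_{j,j+1}]=\alpha_j(\omega_a^\vee)E_{j,j+1}=\delta_{aj}E_{j,j+1}$.

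For item 5, comparing the diagonal entries above with \eqref{eq:p_from_r_compact} shows $M_{ka}=(\omega_a^\vee)_{kk}$. Equivalently, \eqref{eq:p_from_r_compact} reads $\diag(p)-\frac1n\un_n=\sum_a r_a\omega_a^\vee$, and linearity in $r$ gives the partial derivatives.

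For item 6, the matrices $\ii\,\omega_a^\vee$ are traceless and anti-Hermitian (the $\omega_a^\vee$ are real diagonal), so they lie in $\mathfrak{su}(n)$ and commute. They span $\ii\mathfrak h$, which is the maximal torus algebra of diagonal traceless anti-Hermitian matrices, of dimension $n-1$ equal to the rank.

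Item 4 is the main computational step. For \eqref{eq:H_from_omega} I will expand $H_j$ in the basis of item 1 and read off coefficients using item 2. The coefficient of $\omega_a^\vee$ is $\alpha_a(H_j)$, because $\alpha_a(\sum_b c_b\omega_b^\vee)=c_a$. Computing $\alpha_a(H_j)=(H_j)_{aa}-(H_j)_{a+1,a+1}$ gives $2$ if $a=j$, $-1$ if $a=j\pm1$, and $0$ otherwise, which is $C_{ja}$.

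Then \eqref{eq:omega_from_H} amounts to the claim that $C^{-1}$ has entries $\min(a,j)(n-\max(a,j))/n$. To verify this, I will fix $a$, set $f(j)=\min(a,j)(n-\max(a,j))/n$, and extend it by $f(0)=f(n)=0$. I then check the discrete identity $2f(j)-f(j-1)-f(j+1)=\delta_{ja}$. Away from $j=a$, $f$ is linear in $j$, so the second difference vanishes. At $j=a$, the two slopes are $(n-a)/n$ and $-a/n$, so their jump is $1$. This is the only step needing care with the boundary conventions.

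Since $C$ is symmetric, inverting $H=C\omega^\vee$ gives $\omega_a^\vee=\sum_j (C^{-1})_{aj}H_j$, which is the stated formula.
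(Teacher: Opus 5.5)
Your proof is correct and, for items 1--3, 5 and 6, follows the paper's argument. The paper simply calls items 1 and 6 immediate, and your duality argument for linear independence fills that in.

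The only genuine difference is in item 4. The paper takes the closed form $(C^{-1}_{A_{n-1}})_{aj}=\min(a,j)(n-\max(a,j))/n$ as known. It verifies \eqref{eq:omega_from_H} directly on diagonal entries and then obtains \eqref{eq:H_from_omega} by inverting with $C$.

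You go the other way. You read off \eqref{eq:H_from_omega} from the duality \eqref{eq:coweight_def} via $c_a=\alpha_a(H_j)$. You then prove the inverse-Cartan formula through the discrete Green's-function identity $2f(j)-f(j-1)-f(j+1)=\delta_{ja}$ with $f(0)=f(n)=0$.

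Your route is self-contained where the paper quotes a standard fact. The paper's check is shorter, because $\sum_j f_a(j)(H_j)_{kk}=f_a(k)-f_a(k-1)$ is only a first difference, equal to $(n-a)/n$ for $k\le a$ and $-a/n$ for $k>a$.

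The two halves combine neatly. Set $F_{aj}:=f_a(j)$. Once you have $H=C\,\omega^\vee$ from duality (your step) and $\omega^\vee=F\,H$ from the first-difference check (the paper's step), you get $\omega^\vee=FC\,\omega^\vee$. Linear independence of the $\omega_a^\vee$ then forces $FC$ to be the identity matrix, so your second-difference computation is not actually needed.
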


\begin{proof}
Items~(1) and~(6) are immediate.
For~(2): $\alpha_j(\omega_a^\vee)
=(\omega_a^\vee)_{jj}-(\omega_a^\vee)_{j+1,j+1}$.
If $j<a$: $(n-a)/n-(n-a)/n=0$.
If $j=a$: $(n-a)/n-(-a/n)=1$.
If $j>a$: $-a/n-(-a/n)=0$.
For~(3): $[D,E_{j,j+1}]=\alpha_j(D)\,E_{j,j+1}$
for any diagonal $D$, so~\eqref{eq:adj_coweight} follows from~(2).
For~(4): one verifies directly that
$(\omega_a^\vee)_{kk}=\sum_j(C^{-1}_{A_{n-1}})_{aj}(H_j)_{kk}$
using $(C^{-1}_{A_{n-1}})_{aj}=\min(a,j)(n-\max(a,j))/n$,
and inverts using $C$.
Item~(5) is a restatement of~\eqref{eq:p_from_r_compact}.
\end{proof}

\begin{remark}
The identity~\eqref{eq:Jacobian_coweight} expresses the key
structural fact: the spectral gap coordinates $r_a$ are
canonically dual to the simple roots $\alpha_a$ via the
fundamental coweights, explaining why all subsequent formulas
take their simplest form in $\br$-coordinates.
\end{remark}
\begin{example}
We give the expressions of the
coweights for $n=3$, i.e., case $A_2$: 
  \begin{equation}
\label{om3}
\omega_1^\vee= \frac{1}{3}\begin{pmatrix}
    2  & 0 & 0   \\
     0 &  -1 & 0\\
     0& 0 & -1
\end{pmatrix}, \quad \omega_2^\vee =  \frac{1}{3}\begin{pmatrix}
  1    & 0 & 0   \\
   0   &  1 & 0\\
   0 & 0 & -2
\end{pmatrix}.
\end{equation} 
\end{example}
\subsection{Density matrix and spectral decomposition}

From~\eqref{eq:p_from_r_compact} and Proposition~\ref{prop:fundcoweights}(5),
the traceless diagonal matrix $\sfD(\br)=\diag(p_1-1/n,\ldots,p_n-1/n)$
expands cleanly in the fundamental coweight basis:
\begin{equation}
\label{eq:D_r}
\sfD(\br) = \sum_{a=1}^{n-1} r_a\,\omega_a^\vee.
\end{equation}
The density matrix is therefore
\begin{equation}
\label{eq:rho_r}
\rho_{\br,\vph}
= \frac{1}{n}\,\mathbbm{1}_n
+ \sum_{a=1}^{n-1} r_a\, U(\vph)\,\omega_a^\vee\, U^\dagger(\vph)\equiv U(\vph)\,\rho_{\br}\, U^\dagger(\vph),
\end{equation}
where $U(\vph)\in\mathrm{SU}(n)$ (actually $\in \mathrm{SU}(n)/\mathbb{T}^{n-1}$),  parametrizes the eigenbasis.

\section{Statistical comparisons between \texorpdfstring{$R_{n-1}$}{R(n-1)} and \texorpdfstring{$\Delta_{n-1}^{\downarrow}$}{Delta(n-1)}}
\label{statcomp}

\subsection{Entropy and large deviations}

The volume ratio~\eqref{compVolR} is just  $n$, thus
$\frac{1}{n}\log\frac{\Vol(R_{n-1})}{\Vol(\Delta_{n-1}^{\downarrow})}\to 0$.
In large-deviation theory, two families of sets $A_n$, $B_n$ are
\emph{exponentially equivalent} if $\frac{1}{n}\log\frac{\Vol(A_n)}{\Vol(B_n)}\to0$.
Thus $R_{n-1}$ and $\Delta_{n-1}^{\downarrow}$ are exponentially equivalent:
they define the same large-deviation rate at speed $n$.

The passage from $\Delta_{n-1}$ to the ordered simplex $\Delta_{n-1}^{\downarrow}$
removes the combinatorial degeneracy under permutations, reducing the volume by a
factor $n!$ (entropy $\log n!$).  By contrast, the enlargement from
$\Delta_{n-1}^{\downarrow}$ to $R_{n-1}$ restores only the factor $n$ (entropy
$\log n$), negligible compared with $\log n!\sim n\log n$.  Both regions therefore
encode the same leading information content, differing only at logarithmic order.

Near the maximally mixed state, $p_i=\frac1n+\delta_i$ with $\sum_i\delta_i=0$,
the Shannon entropy satisfies
\[
H(p)=\log n-\frac{n}{2}\sum_i\delta_i^2+O(\|\delta\|^3),
\]
so entropy is controlled by quadratic fluctuations.  In $\br$-coordinates these
fluctuations are expressed naturally in terms of spectral gaps, which are precisely
the relevant degrees of freedom for decoherence, majorization, and monotone metrics.

\subsection{Metric considerations}

\subsubsection{Fisher--Rao metric in \texorpdfstring{$\br$}{r}-coordinates}

The Fisher–Rao metric~\cite{fisher1925,rao1945,amari2000} on the probability simplex
\begin{equation}
\label{eq:fisher-p}
g^{\mathrm{F}}=\sum_{k=1}^n\frac{\ud p_k^2}{p_k},\quad\sum_k\ud p_k=0,
\end{equation}
(up to the conventional factor $1/4$ used in some references), quantifies the statistical distinguishability of nearby spectra and provides the local quadratic approximation of the relative entropy. It 
pulls back to $R_{n-1}$ via~\eqref{eq:p_from_r_compact} as
\begin{equation}
\label{eq:fisher-r}
g^{\mathrm{F}}_{ab}(\br)
=\sum_{k=1}^n\frac{M_{ka}\,M_{kb}}{p_k(\br)},
\qquad a,b=1,\ldots,n-1.
\end{equation}
At the maximally mixed state  we have $\br=\mathbf{0}$, $p_k=1/n$ for all $k$, so
$g^{\mathrm{F}}_{ab}(0)=n\sum_k M_{ka}M_{kb}$.
Computing the sum using~\eqref{eq:p_from_r_compact} gives
\begin{equation}
\label{eq:fisher-r-origin}
g^{\mathrm{F}}_{ab}(0)
= \min(a,b)\bigl(n-\max(a,b)\bigr)
= n\,\bigl(C^{-1}_{A_{n-1}}\bigr)_{ab}.
\end{equation}
Thus, at the maximally mixed state, the Fisher--Rao metric in
gap coordinates is $n$ times the inverse Cartan matrix of
type $A_{n-1}$.  This is a direct consequence of
identity~\eqref{eq:Jacobian_coweight}: since $M_{ka}=(\omega_a^\vee)_{kk}$,
$$
g^{\mathrm{F}}_{ab}(0)
= n\,\mathrm{Tr}(\omega_a^\vee\omega_b^\vee)
= n\,(C^{-1}_{A_{n-1}})_{ab}.
$$
It is striking that a purely statistical object -- distinguishability of nearby
spectra -- collapses at the point of maximal symmetry into a purely
Lie-algebraic one: the gap coordinates diagonalize information geometry in
exactly the basis singled out by the root system of $\mathfrak{su}(n)$.

\subsubsection{Relative entropy near the maximally mixed state}

The relative entropy $D(p(\br)\|u_n)=\sum_k p_k\ln(np_k)$, $u_n =1/n$,
vanishes at $\br=\mathbf{0}$ and has zero gradient there
(since $\sum_k M_{ka}=0$).  Its Hessian equals
$g^{\mathrm{F}}_{ab}(0)$, giving the quadratic approximation
\begin{equation}
\label{eq:KL-r}
D(p(\br)\|u_n)
= \frac{n}{2}\sum_{a,b=1}^{n-1}\bigl(C^{-1}_{A_{n-1}}\bigr)_{ab}
  r_a\,r_b + O(\|\br\|^3).
\end{equation}
In short, Eq.~\eqref{eq:KL-r} states that relative entropy is,
to leading order, just the Fisher--Rao distance squared: the
same $A_{n-1}$ Cartan structure that fixes the metric also fixes
the entropic cost of opening each gap.

\subsubsection{Bures metric in \texorpdfstring{$\br$}{r}-coordinates}

For $\rho=U\,\diag(p_1,\ldots,p_n)\,U^\dagger$, the Bures metric
splits as
\begin{equation}
\label{eq:bures-split}
\ud s_{\mathrm{B}}^2
=\frac{1}{4}\sum_{k=1}^n\frac{\ud p_k^2}{p_k}
+\frac{1}{2}\sum_{i<j}
  \frac{(p_i-p_j)^2}{p_i+p_j}\,|\theta_{ij}|^2,
\qquad\theta:=U^\dagger\ud U.
\end{equation}
A key advantage of gap coordinates is the \emph{uniform} telescoping
formula: for all $1\le i<j\le n$,
\begin{equation}
\label{eq:gap-r}
p_i - p_j = \sum_{a=i}^{j-1}r_a,
\end{equation}
with no special case for pairs involving $p_n$.
Substituting~\eqref{eq:fisher-r} and~\eqref{eq:gap-r}
into~\eqref{eq:bures-split}:
\begin{equation}
\label{eq:bures-r}
\ud s_{\mathrm{B}}^2
=\frac{1}{4}\sum_{a,b}g^{\mathrm{F}}_{ab}(\br)\,\ud r_a\,\ud r_b
+\frac{1}{2}\sum_{1\le i<j\le n}
  \frac{\bigl(\sum_{a=i}^{j-1}r_a\bigr)^2}
       {p_i(\br)+p_j(\br)}\,|\theta_{ij}|^2.
\end{equation}
The angular cost of each mode $(i,j)$ is controlled by the
cumulative gap $\sum_{a=i}^{j-1}r_a$, which is linear in $\br$.
Equation~\eqref{eq:bures-r} makes this explicit: it splits the Bures metric
cleanly into a radial piece, carrying the full $A_{n-1}$ Cartan structure, and
an angular piece whose weights are simple linear combinations of the gaps,
with no separate treatment needed for pairs involving $p_n$.

\section{Advantages of the gap-coordinate parametrization}
\label{discint}

Sections~\ref{sec:statespace} and~\ref{statcomp} established, in turn, that the
gap coordinates $\mathbf r$ collapse the $n-1$ ordering constraints
$p_1\ge\cdots\ge p_n\ge0$ into the single linear inequality
$\sum_a a\,r_a\le1$, that they carry an intrinsic Lie-algebraic meaning as
fundamental-coweight coordinates of $\mathfrak{sl}(n)$
(Proposition~\ref{prop:fundcoweights}), and that every spectral gap
$p_i-p_j$ admits the uniform telescoping form~\eqref{eq:gap-r}, with no
special case for pairs involving $p_n$.

\medskip
\noindent
It is the combination of these three properties -- not any one in
isolation -- that makes $\mathbf r$ a more convenient spectral coordinate
than the raw eigenvalues $p_k$ for the metric, purity, and dynamical
computations of the sections that follow, in the same way that simple
roots, rather than arbitrary positive roots, provide the canonical basis
for root-system calculations in Lie theory. A precise comparison with the
eigenvalue coordinates used in the existing literature, and the resulting
novelty claim, is deferred to Section~\ref{sec:related}.
\section{A linear alternative to purity}
\label{sec:purity}

Let $\rho$ be an $n$-level quantum state with ordered eigenvalues
$p_1\ge\cdots\ge p_n\ge0$ and gap coordinates $r_a=p_a-p_{a+1}\ge0$.
We introduce the quantity
\begin{equation}
\label{eq:linear_radius}
\mathcal{R}(\rho)
:=\frac{n}{2(n-1)}\left\|\rho-\frac{\mathbbm{1}_n}{n}\right\|_1
=\frac{n}{2(n-1)}\sum_{i=1}^n\left|p_i-\frac{1}{n}\right|,
\end{equation}
where $\|\cdot\|_1$ denotes the trace norm.

We now give the expression of \eqref{eq:linear_radius} in gap coordinates.
Since $r_a\ge0$ and eigenvalues are ordered, the deviations
$p_k-1/n$ change sign exactly once: define the \emph{crossover index}
\begin{equation}
\label{eq:crossover}
k^*(\br):=\max\!\left\{k:p_k(\br)\ge\frac{1}{n}\right\},
\end{equation}
so that $p_k\ge1/n$ for $k\le k^*$ and $p_k<1/n$ for $k>k^*$.  The cases in which one or more $p_{k^*} = 1/n$  introduce a degeneracy in the above definition; however, as they form a set of measure zero, they are excluded from the present analysis. 

 Using the Jacobian identity $M_{ka}=(\omega_a^\vee)_{kk}$ and
summing over $k=1,\ldots,k^*$,
\begin{equation}
\label{eq:R_r}
\mathcal{R}(\rho)
=\frac{n}{n-1}
\sum_{a=1}^{n-1}
\bigl(C^{-1}_{A_{n-1}}\bigr)_{a,k^*}\,r_a
=\frac{n}{n-1}
\sum_{a=1}^{n-1}
\frac{\min(a,k^*)\bigl(n-\max(a,k^*)\bigr)}{n}\,r_a,
\end{equation}
where $C^{-1}_{A_{n-1}}$ is the inverse Cartan matrix of type $A_{n-1}$.
On each region $\{k^*=m\}$ of the weighted simplex $R_{n-1}$,
$\mathcal{R}(\rho)$ is \emph{linear} in $\br$; the full function is
piecewise linear with $n-1$ pieces corresponding to $m=1,\ldots,n-1$.

For lower cases for $n$ we have:
\begin{itemize}
\item $n=2$: $k^*=1$ always, $C^{-1}_{A_1}=(1/2)$, and
  $\mathcal{R}(\rho)=r_1=p_1-p_2$, the Bloch-sphere radius. 
\item $n=3$, $k^*=1$ (i.e.\ $r_2<r_1$):
  $\mathcal{R}(\rho)=r_1+\tfrac{1}{2}r_2$.
\item $n=3$, $k^*=2$ (i.e.\ $r_2\ge r_1$):
  $\mathcal{R}(\rho)=\tfrac{1}{2}r_1+r_2$.
\end{itemize}

The functional $\mathcal{R}(\rho)$ is unitarily invariant and satisfies:
\begin{itemize}
\item $\mathcal{R}(\rho)=0$ if and only if $\rho=\mathbbm{1}_n/n$
  (all $r_a=0$);
\item $\mathcal{R}(\rho)=1$ for any pure state;
\item $\mathcal{R}(\rho)\in[0,1]$ on all of $R_{n-1}$;
\item non-differentiability occurs only on the measure-zero
  hypersurfaces $\{k^*(\bs)=m\}$ where consecutive eigenvalues
  cross $1/n$, i.e.\ $p_m=1/n$.
\end{itemize}

Let us  discuss the interest  of the quantity $\mathcal{R}(\rho)$. 

In terms of advantages we can assert the following points. 

\textit{(i) Natural generalization of the qubit radius.}
For $n=2$, $\mathcal{R}(\rho)=r_1=p_1-p_2$ is the Bloch-sphere
radius.  Equation~\eqref{eq:R_r} is its canonical $n$-level
extension, replacing the single gap $r_1$ by a weighted combination
of all gaps, with weights given by the inverse Cartan matrix
evaluated at the crossover column $k^*$.

\medskip
\textit{(ii) Piecewise linearity and Cartan-matrix weights.}
On each linear piece $\{k^*=m\}$, equation~\eqref{eq:R_r} shows
that $\mathcal{R}$ weights the gap $r_a$ by
$(C^{-1}_{A_{n-1}})_{a,m}=\min(a,m)(n-\max(a,m))/n$.  Gaps near
the crossover index $m$ receive the largest weight; gaps far from
the crossover contribute less.  This gives a precise spectral
meaning to $\mathcal{R}$: it measures the total deviation from
uniformity, with each gap weighted by its distance to the crossover.

\medskip
\textit{(iii) Operational metric meaning.}
$\mathcal{R}(\rho)$ is proportional to the trace distance from
$\rho$ to the maximally mixed state, admitting a direct operational
interpretation in terms of distinguishability by optimal measurement.

\medskip
\textit{(iv) Convexity and decoherence monotonicity.}
Since the trace norm is convex, $\mathcal{R}$ satisfies
\[
\mathcal{R}\!\left(\sum_k\lambda_k\rho_k\right)
\le\sum_k\lambda_k\mathcal{R}(\rho_k).
\]
Mixing with the maximally mixed state gives
$\mathcal{R}(\lambda\rho+(1-\lambda)\mathbbm{1}_n/n)
=\lambda\,\mathcal{R}(\rho)$, so decoherence towards the maximally
mixed state contracts $\mathcal{R}$ linearly.

\medskip
\textit{(v) Clean domain and efficient parametrization.}
The weighted simplex $R_{n-1}=\{\br\ge0,\,\sum_a a\,r_a\le1\}$
parametrizes the ordered eigenvalue spectrum with a single linear
constraint, and represents $1/n!$ of the full eigenvalue simplex
$\Delta_{n-1}$.  This makes sampling and visualization of purity
landscapes significantly more efficient as $n$ grows.

We are naturally aware of the limitations outlined below.

\textit{(i) Non-smoothness.}
Unlike the quadratic purity $\gamma(\rho)=\mathrm{Tr}\,\rho^2$,
$\mathcal{R}(\rho)$ is not differentiable on the hypersurfaces
$\{p_m=1/n\}$.  This excludes it from gradient-based variational
methods without modification (e.g.\ smoothing by replacing
$|\cdot|$ by $\sqrt{(\cdot)^2+\varepsilon^2}$).

\medskip
\textit{(ii) Dependence on crossover index.}
The piecewise-linear structure means that the formula~\eqref{eq:R_r}
requires knowledge of $k^*(\br)$, which depends on the spectrum.
This makes symbolic manipulation more involved than for polynomial
invariants.

\medskip
\textit{(iii) Coarser spectral resolution.}
Different spectra with the same total $\ell^1$-deviation from $1/n$
have identical $\mathcal{R}$ but generically different purities
$\mathrm{Tr}\,\rho^2$.  $\mathcal{R}$ is therefore a coarser
spectral invariant than $\mathrm{Tr}\,\rho^2$.

\medskip
\textit{(iv) No polynomial structure.}
$\mathrm{Tr}\,\rho^2$ arises naturally in purity-decay rates under
Lindblad evolution and in the Hilbert--Schmidt geometry;
$\mathcal{R}(\rho)$ does not share this property.

In terms of originality and perspective, we think that
the two quantities are complementary probes of quantum mixedness.
The quantity $\mathcal{R}(\rho)$ is, up to normalization, the trace
distance from $\rho$ to the maximally mixed state $\mathbf{1}_n/n$,
an object that appears implicitly in various contexts in quantum
information theory~\cite{nielsen2000,helstrom1976,benatti2023,holevo2011}.  As a
purity measure normalized to $[0,1]$, it is distinct from the
dimensionally renormalized linear entropy
$\frac{n}{n-1}(1-\mathrm{Tr}\,\rho^2)$~\cite{BZ2006,breupetr02,holevo2011,watrous2018,benatti2023}, which is
polynomial in $\rho$ rather than spectral. Finally, what appears to be new is the explicit expression in gap
coordinates: on each linear piece $\{k^*=m\}$ of $R_{n-1}$,
equation~\eqref{eq:R_r} expresses $\mathcal{R}(\rho)$ as a weighted
sum of gaps $r_a$ with weights $(C^{-1}_{A_{n-1}})_{a,m}$, the
entries of the inverse Cartan matrix of type $A_{n-1}$.  This
connects a standard quantum-information quantity directly to the
root system of $\mathfrak{sl}(n)$, and does not appear to have
been noted in the literature.  

\section{\texorpdfstring{SU$(n)$}{SU(n)} parametrization}
\label{SUPar}

\subsection{General aspects}

From expression~\eqref{nrho}, only the left coset $\SU(n)/\mathbb{T}^{n-1}$ is relevant,
since elements of the maximal torus
\begin{equation}
\label{torus}
\mathbb{T}^{n-1}=\Big\{\diag\!\big(\eu^{i\theta_1},\ldots,\eu^{i\theta_n}\big)\in\SU(n)
:\sum_{j=1}^n\theta_j\equiv0\pmod{2\pi}\Big\}\cong\mathrm{U}(1)^{n-1}
\end{equation}
commute with the diagonal $\sfD(\br)$.  The torus has dimension $n-1$ (the rank of
$\SU(n)$), with Lie algebra $\mathfrak{t}=\{\ii\diag(\alpha_1,\ldots,\alpha_n):
\sum_j\alpha_j=0\}$; its Weyl group is $\mathrm{N}_{\SU(n)}(\mathbb{T}^{n-1})/\mathbb{T}^{n-1}\cong S_n$.

The total count checks out: $(n-1)+(n^2-n)=n^2-1$ real parameters for a generic
non-degenerate $\rho_{\br,\vph}$ (hermiticity and unit trace).

\subsection{Tilma--Sudarshan-like parametrization}

We now introduce  the embedded $\mathfrak{su}(2)$ generators for each pair $1\le i<j\le n$ in terms of the standard matrix units $(E_{ij})_{kl}=\delta_{ik}\delta_{jl}$:
\begin{equation}
\label{embedded-su2}
\sigma_1^{(i,j)}=\E{i}{j}+\E{j}{i},\quad
\sigma_2^{(i,j)}=-\ii(\E{i}{j}-\E{j}{i}),\quad
\sigma_3^{(i,j)}=\E{i}{i}-\E{j}{j}.
\end{equation}
For the diagonal sector we use the orthonormal Cartan generators
\begin{equation}
\label{cartan-generators}
\Hc{\ell}=\frac{1}{\sqrt{\ell(\ell+1)}}
\!\left(\sum_{j=1}^{\ell}\E{j}{j}-\ell\,\E{\ell+1}{\ell+1}\right),
\quad\ell=1,\ldots,n-1,\quad\mathrm{tr}(\Hc{\ell}\Hc{m})=\delta_{\ell m}.
\end{equation}

\noindent\textbf{Parametrization.}
Every $U_{\mathrm{full}}\in\SU(n)$ can be written, in lexicographic order, as
\begin{equation}
\label{paramSUn}
U_{\mathrm{full}}(\theta_{i,j},\phi_{i,j},\phi_\ell)
=\left(\prod_{j=2}^{n}\;\prod_{i=1}^{j-1}
R_{i,j}(\theta_{i,j},\phi_{i,j})\right)D(\phi_1,\ldots,\phi_{n-1}),
\end{equation}
where the elementary factors are embedded $\SU(2)$ rotations in the $(i,j)$-plane,
\begin{equation}
\label{Rij}
R_{i,j}(\theta,\phi)
=\exp\!\bigl(-\ii\tfrac{\phi}{2}\sigma^{(i,j)}_3\bigr)\,
\exp\!\bigl(-\ii\tfrac{\theta}{2}\sigma^{(i,j)}_2\bigr)\,
\exp\!\bigl(\ii\tfrac{\phi}{2}\sigma^{(i,j)}_3\bigr),
\end{equation}
and the diagonal Cartan-torus factor is
\begin{equation}
\label{Dphi}
D(\phi_1,\ldots,\phi_{n-1})=\exp\!\!\left(\ii\sum_{\ell=1}^{n-1}\phi_\ell\,\Hc{\ell}\right).
\end{equation}
For $n=2$ this recovers the qubit parametrization \eqref{Un2}.  The total parameter count is
$n(n-1)+( n-1)=n^2-1=\dim\SU(n)$.

\noindent\textbf{Notation.}
Modulo the right action of the Cartan torus, we write
\begin{equation}
\label{Ustnot}
U(\vph)=\prod_{i=1}^{n-1}\;\prod_{j=i+1}^{n}R_{i,j}(\theta_{i,j},\phi_{i,j})
=(\bu_1,\bu_2,\ldots,\bu_n),
\end{equation}
where $\bu_j=U(\vph)\be_j$ are the column vectors of $U$, forming an orthonormal
basis of $\C^n$.

\subsection{Example: \texorpdfstring{$n=3$}{n=3}}
\label{sec:n3}

For $n=3$ the coset element $U(\vph)\in\SU(3)/\mathbb{T}^2$ depends on six angles
$\theta_{1,2},\theta_{1,3},\theta_{2,3},\phi_{1,2},\phi_{1,3},\phi_{2,3}$, with
\[
U(\vph)=R_{1,2}(\theta_{1,2},\phi_{1,2})\,R_{1,3}(\theta_{1,3},\phi_{1,3})\,
R_{2,3}(\theta_{2,3},\phi_{2,3}).
\]
Setting $c_{ij}:=\cos\frac{\theta_{ij}}{2}$, $s_{ij}:=\sin\frac{\theta_{ij}}{2}$,
one obtains
\begin{align}
\label{Un3}
\nonumber&U(\vph)=\begin{pmatrix}
c_{12}c_{13} &
-\eu^{-i\phi_{12}}s_{12}c_{23}-\eu^{-i\phi_{13}+i\phi_{23}}c_{12}s_{13}s_{23} &
\eu^{-i(\phi_{12}+\phi_{23})}s_{12}s_{23}-\eu^{-i\phi_{13}}c_{12}s_{13}c_{23}\\[1ex]
\eu^{i\phi_{12}}s_{12}c_{13} &
c_{12}c_{23}-\eu^{i(\phi_{12}-\phi_{13}+\phi_{23})}s_{12}s_{13}s_{23} &
-\eu^{-i\phi_{23}}c_{12}s_{23}-\eu^{i(\phi_{12}-\phi_{13})}s_{12}s_{13}c_{23}\\[1ex]
\eu^{i\phi_{13}}s_{13} & \eu^{i\phi_{23}}c_{13}s_{23} & c_{13}c_{23}
\end{pmatrix}\\&\equiv(\bu_1,\bu_2,\bu_3).
\end{align}
The first column $\bu_1$ is a generic unit vector in $\C^3$ (a point of $\CP^2$);
the second column $\bu_2$ lies in $\bu_1^\perp$ and is obtained by the additional
rotation $R_{2,3}$; the third column $\bu_3$ is then uniquely fixed by
orthonormality, completing the flag manifold $\SU(3)/\mathbb{T}^2\simeq\mathcal{F}_3$ (see below).

\subsection{Coherent-state and flag-manifold structure}
\label{sec:param_flag}

\noindent\textbf{Perelomov coherent states.}
For the fundamental representation of $\SU(n)$ on $\C^n$, the first column
$\bu_1=U(\vph)\be_1$ is a Perelomov coherent state~\cite{perelomov86} with isotropy
subgroup $\mathrm{U}(n-1)$, so its orbit is $\SU(n)/\mathrm{U}(n-1)\simeq\CP^{n-1}$.
The rank-one projectors $|\bu_1\rg\lg\bu_1|$ then satisfy a resolution of the
identity over $\CP^{n-1}$ with respect to the Fubini--Study measure -- simply
the $i=1$ case of the general flag-manifold identity~\eqref{flag_resol} below.

\noindent\textbf{Flag manifold.}
Keeping the full ordered frame $U=(\bu_1,\ldots,\bu_n)$ modulo independent column
phases leads to the \emph{complete flag manifold}
$$
\mathcal{F}_n\simeq\SU(n)/\mathbb{T}^{n-1}= \left\{F\equiv[U] \in \SU(n)/\mathbb{T}^{n-1}\right\},
$$
where $[U]$ stands for the equivalence class of $U\in \mathrm{SU}(n)$ $\mathrm{mod}\,D\in \mathbb{T}^{n-1}$. 
Its nested structure is $\mathcal{F}_n\sim\CP^{n-1}\ltimes\CP^{n-2}\ltimes\cdots\ltimes\CP^1$:
at each step, $\bu_k$ defines a coherent state in the projective space of the
orthogonal complement of $\mathrm{span}\{\bu_1,\ldots,\bu_{k-1}\}$.  The normalized
invariant measure factorizes accordingly:
\begin{equation}
\label{flagmeas}
\ud\mu_{\mathcal{F}_n}
=\ud\mu_{\CP^{n-1}}(\bu_1)\,\ud\mu_{\CP^{n-2}}(\bu_2|\bu_1)\,
\cdots\,\ud\mu_{\CP^1}(\bu_{n-1}|\bu_1,\ldots,\bu_{n-2}).
\end{equation}
In the angular coordinates this reads explicitly
$$
\ud\mu_{\mathcal{F}_n}(F)
=\left(\prod_{m=1}^{n-1}\frac{m!}{(4\pi)^m}\right)
\prod_{1\le i<j\le n}\sin\theta_{i,j}\cos^{2(j-i-1)}\!\frac{\theta_{i,j}}{2}\,
\ud\theta_{i,j}\,\ud\phi_{i,j},
$$
with total volume 
\begin{equation}
\label{volFn}
\Vol(\mathcal{F}_n)=\dfrac{(4\pi)^{n(n-1)/2}}{\prod_{m=1}^{n-1}m!}
\end{equation}

\begin{prop}[Global flag resolution of the identity]
\label{prop:flag_resol}
For each $i=1,\ldots,n$, the column vectors $\bu_i(F)=U(\vph)\be_i$ satisfy
\begin{equation}
\label{flag_resol}
\int_{\mathcal{F}_n}n\,|\bu_i(F)\rg\lg\bu_i(F)|\,\ud\mu_{\mathcal{F}_n}(F)=\un_n.
\end{equation}
Consequently, the family $\{\rho_{\br,\vph}\}_{F\in\mathcal{F}_n}$ provides a
resolution of the identity:
\begin{equation}
\label{rho_resol}
\int_{\mathcal{F}_n}n\,\rho_{\br,\vph}\,\ud\mu_{\mathcal{F}_n}(F)=\un_n.
\end{equation}
\end{prop}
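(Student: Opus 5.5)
The plan is to deduce \eqref{flag_resol} from the left-invariance of the normalized measure $\ud\mu_{\mathcal{F}_n}$ together with Schur's lemma. We then obtain \eqref{rho_resol} by linearity from the spectral decomposition \eqref{nrho}.

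\textbf{Step 1 (well-definedness).} For fixed $i$, the projector $\Pr_{\bu_i(F)}=U\,E_{ii}\,U^\dagger$ depends only on the class $F=[U]\in\SU(n)/\mathbb{T}^{n-1}$. Indeed, replacing $U$ by $UD$ with $D\in\mathbb{T}^{n-1}$ multiplies $\bu_i$ by a phase, and $D E_{ii}D^\dagger=E_{ii}$. So the integrand is a continuous function on the compact manifold $\mathcal{F}_n$. The integral $A_i:=\int_{\mathcal{F}_n}\Pr_{\bu_i(F)}\,\ud\mu_{\mathcal{F}_n}(F)$ is therefore a well-defined positive operator with $\mathrm{Tr}\,A_i=\int\ud\mu_{\mathcal{F}_n}=1$, since the measure is normalized.

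\textbf{Step 2 (invariance and Schur).} For $V\in\SU(n)$ we have $V\Pr_{\bu_i(F)}V^\dagger=\Pr_{\bu_i(VF)}$, where $VF=[VU]$. The normalized measure \eqref{flagmeas} is the pushforward of the normalized Haar measure on $\SU(n)$, hence left-$\SU(n)$-invariant. A change of variables $F\mapsto VF$ then gives $VA_iV^\dagger=A_i$ for all $V$. The fundamental representation of $\SU(n)$ on $\C^n$ is irreducible, so by Schur's lemma $A_i=c\,\un_n$. Taking the trace gives $c=1/n$, which is \eqref{flag_resol}.

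\textbf{Step 3 (the $\rho$ resolution).} By \eqref{nrho}, $\rho_{\br,\vph}=\sum_i p_i\Pr_{\bu_i(F)}$, with $p_i$ independent of $F$. Integrating term by term and using Step 2 gives $\int n\rho\,\ud\mu=\sum_ip_i\un_n=\un_n$, because $\sum_ip_i=1$. Equivalently, one can use $\rho=\frac1n\un_n+U\sfD(\br)U^\dagger$: the traceless part integrates to a multiple of the identity with zero trace, hence to zero.

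\textbf{Main obstacle.} The only delicate point is the identification of the explicit angular density with the invariant (Haar-pushforward) measure. This includes the claim that the parametrization \eqref{Ustnot} covers $\mathcal{F}_n$ up to a null set. I would not verify this through the explicit Jacobian; instead I would rely on the factorization \eqref{flagmeas}, where each conditional factor is the unitarily invariant Fubini--Study measure on the projective space of the orthogonal complement. Those factors assemble into an $\SU(n)$-invariant probability measure on frames modulo phases, and such a measure is unique. As a sanity check, an alternative for $i=1$ is the known $\CP^{n-1}$ resolution. For general $i$, one can also argue that $\bu_i(F)$ is distributed like $\bu_1$ under the Haar measure, using the Weyl permutation acting on the right.
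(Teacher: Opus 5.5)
Your proposal is correct and follows the paper's own argument. The paper defines $A_i=\int_{\mathcal{F}_n}|\bu_i\rangle\langle\bu_i|\,\ud\mu_{\mathcal{F}_n}$, uses $\SU(n)$-invariance and Schur's lemma to get $A_i=c_i\,\un_n$, fixes $c_i=1/n$ by taking the trace with $\int\ud\mu_{\mathcal{F}_n}=1$, and then derives \eqref{rho_resol} by averaging \eqref{nrho} linearly. Your two extra checks only make explicit what the paper takes for granted: that $U E_{ii}U^\dagger$ is well defined on the coset, and that the explicit angular density is the unique invariant probability measure.
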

\begin{proof}
Define $A_i:=\int_{\mathcal{F}_n}|\bu_i\rg\lg\bu_i|\,\ud\mu_{\mathcal{F}_n}$.
By $\SU(n)$-invariance and Schur's lemma, $A_i=c_i\,\un_n$.  Taking the trace
and using $\int\ud\mu_{\mathcal{F}_n}=1$ gives $c_i=1/n$, proving~\eqref{flag_resol}.
Averaging~\eqref{nrho} over the flag manifold yields~\eqref{rho_resol} by linearity.
\end{proof}

\noindent\textbf{Covariant integral quantization.}
Proposition~\ref{prop:flag_resol}  provides, in the spirit of~\cite{befrigape22},
an $\SU(n)$-covariant integral quantization of functions on the flag manifold:
\begin{equation}
\label{quant}
f(F)\;\longmapsto\;\mathrm{Op}_f
:=\int_{\mathcal{F}_n}f(F)\,n\,\rho_{\br,\vph}\,\ud\mu_{\mathcal{F}_n}(F).
\end{equation}
This map sends classical observables on $\mathcal{F}_n\simeq\SU(n)/\mathbb{T}^{n-1}$
to operators on $\C^n$, and is a mixed-state generalization of Perelomov coherent
states where the complete flag manifold replaces the projective space.
It is of particular interest to identify the simplest symbols $f(F)$ whose
quantization $\mathrm{Op}_f$ reproduces hermitian elements of the Lie algebra
$\mathfrak{gl}(n,\C)$ itself -- most notably the fundamental
coweights~\eqref{eq:fund_coweight} -- ; partial results in this direction will
be presented in a forthcoming publication.

\noindent\textbf{Volume of the state space.}
The spectral--angular decomposition naturally endows the (non-degenerate) state
space $\mathcal{D}_n^\circ$ with the product measure $\ud^{n-1}\br\,\ud\mu_{\mathcal{F}_n}$.
Its total volume is simply the product of \eqref{Rn1Vol} and \eqref{volFn} : 
\begin{equation}
\label{volDncirc}
\Vol(\mathcal{D}_n^\circ)
=
\frac{(4\pi)^{n(n-1)/2}}{((n-1)!)^2\prod_{m=1}^{n-1}m!}.
\end{equation}
Degenerate spectra correspond to lower-dimensional strata (partial flag manifolds)
of measure zero, which do not contribute to this volume.

\section{GKLS dynamics in spectral--angular coordinates}
\label{sec:gkls}

\subsection{Framework}
We consider an $n$-level quantum system whose reduced dynamics is governed
by the GKLS equation
\begin{equation}
\label{GKLS}
\dot\rho=\mathcal{L}[\rho]=-\ii[H,\rho]+\mathcal{L}_{\mathrm{diss}}[\rho].
\end{equation}
Continuing the analysis of the open-qubit dynamical system defined by Eqs.  \eqref{dphi}, \eqref{dthet}, \eqref{dr}, we exploit the spectral--angular decomposition of $\rho_{\br,\vph}=U(\vph)\,\rho_{\br}\,U(\vph)^\dagger$ to separate  in order to separate:
\begin{itemize}
  \item the \emph{spectral variables} $\br=(r_1,\ldots,r_{n-1})$, encoding the
        eigenvalue gaps via~\eqref{def:r},
  \item the \emph{angular variables} $\vph$, encoding the eigenvectors on
        $\mathcal{F}_n$.
\end{itemize}

\subsection{Evolution equations}

\noindent\textbf{Time derivative of $\rho_{\br,\vph}$.}\\
Differentiating $\rho_{\br,\vph}=U(\vph)\rho_{\br}U^\dagger(\vph)$ and setting
$\Omega(\vph):=\dot{U}(\vph)U^\dagger(\vph)\in\mathfrak{su}(n)$, we obtain
\begin{equation}
\label{rhodot_decomp}
\dot\rho_{\br,\vph}=U(\vph)\dot\rho_{\br}U^\dagger(\vph)+[\Omega(\vph),\rho_{\br,\vph}].
\end{equation}

\noindent\textbf{Projection onto spectral and angular parts.}\\
To keep notation concise, we drop the $\vph$
 dependence.
Conjugating the GKLS equation by $U^\dagger$ and defining
$\widetilde{H}:=U^\dagger HU$,
$\widetilde{\mathcal{L}}_{\mathrm{diss}}[\rho_{\br}]:=U^\dagger\mathcal{L}_{\mathrm{diss}}[\rho]U$,
$\widetilde{\Omega}:=U^\dagger\Omega U$, one obtains
\begin{equation}
\label{split_eigenbasis}
\dot\rho_{\br}+[\widetilde{\Omega},\rho_{\br}]
=-\ii[\widetilde{H},\rho_{\br}]+\widetilde{\mathcal{L}}_{\mathrm{diss}}[\rho_{\br}].
\end{equation}

\noindent\textbf{Spectral equations.}\\
Taking diagonal matrix elements of~\eqref{split_eigenbasis},
the Hamiltonian commutator contributes nothing
($([\widetilde{H},\rho_{\br}])_{ii}=0$), giving
\begin{equation}
\label{spectral_eq}
\dot{p}_i
=D_i[\rho_{\br}],
\qquad
D_i[\rho_{\br}]:=
\bigl(\widetilde{\mathcal{L}}_{\mathrm{diss}}[\rho_{\br}]\bigr)_{ii},
\qquad i=1,\ldots,n.
\end{equation}
The eigenvalues evolve under dissipation only, independently of the Hamiltonian.

\noindent\textbf{Angular equations.}\\
Taking off-diagonal elements of~\eqref{split_eigenbasis}, for $i\ne j$ and
non-degenerate spectrum ($p_i\ne p_j$),
\begin{equation}
\label{omega_eq}
\widetilde{\Omega}_{ij}
=-\ii\widetilde{H}_{ij}
+\frac{\bigl(\widetilde{\mathcal{L}}_{\mathrm{diss}}[\rho_{\br}]\bigr)_{ij}}{p_i-p_j}.
\end{equation}
The angular dynamics is driven by both the Hamiltonian and dissipative contributions.
The factors $(p_i-p_j)^{-1}$ reflect the singular behaviour of the
parametrisation near degenerate spectra, consistent with the stratified
structure of the state space.

\subsection{Spectral equations in \texorpdfstring{$\br$}{r}-coordinates}

Differentiating the gap relations~\eqref{def:r} directly and using the
spectral equations~\eqref{spectral_eq} yields
\begin{equation}
\label{rdot}
\dot{r}_a
= \dot{p}_a-\dot{p}_{a+1}
= D_a[\rho_{\br}]-D_{a+1}[\rho_{\br}],
\qquad a=1,\ldots,n-1.
\end{equation}
Thus the spectral sector reduces to a closed system of $n-1$ equations
depending only on the diagonal part of the dissipator in the instantaneous
eigenbasis.  Each gap $r_a$ is driven by the \emph{difference} of two
adjacent diagonal dissipator entries, a structure that reflects the
definition~\eqref{def:r} directly.

\noindent\textbf{Recovering eigenvalue rates.}\\
Differentiating the inverse relations~\eqref{eq:p_from_r_compact} gives
\begin{equation}
\label{pdot_from_rdot}
\dot{p}_k
= \sum_{a=1}^{n-1} M_{ka}\,\dot{r}_a,
\qquad k=1,\ldots,n,
\end{equation}
with the matrix $M_{ka}$ defined in~\eqref{eq:p_from_r_compact}.
Expanding explicitly,
\begin{equation}
\label{pdot_explicit}
\dot{p}_k
= \sum_{a=k}^{n-1}\!\Bigl(1-\tfrac{a}{n}\Bigr)\dot{r}_a
  -\sum_{a=1}^{k-1}\tfrac{a}{n}\,\dot{r}_a,
\qquad k=1,\ldots,n,
\end{equation}
with the convention that empty sums vanish.
One verifies directly that $\sum_{k=1}^n\dot{p}_k=0$, as required by
trace preservation.

\subsection{Partial decoupling: summary}
\begin{itemize}
  \item \emph{Spectral variables}: the gap rates $\dot{r}_a=D_a-D_{a+1}$
        are determined solely by the diagonal dissipator entries;
        the Hamiltonian plays no role.
  \item \emph{Angular variables}: $\widetilde{\Omega}_{ij}$ depends on
        both Hamiltonian and dissipative terms via~\eqref{omega_eq}.
\end{itemize}
This separation provides a geometrically natural description of GKLS dynamics
in terms of spectral flows on the simplex of ordered eigenvalues and coherent
rotations on the flag manifold $\mathcal{F}_n$.

\subsection{Illustration: the real qutrit sector}
\label{sec:real-qutrit}

We now illustrate the general spectral--angular decomposition with the
 real case $n=3$  with real symmetric density
matrices (see the recent \cite{Curado-etal25} for the simplest non-trivial case, namely the real qubit).  In that case the angular part is described by an orthogonal
matrix $U\in \mathrm{SO}(3)$, depending on three Euler angles.

\paragraph{Spectral part.}
For \(n=3\), the gap coordinates are
$$
r_1=p_1-p_2,\qquad r_2=p_2-p_3,
$$
with
$$
p_1(\br)=\frac13+\frac23\,r_1+\frac13\,r_2,\qquad
p_2(\br)=\frac13-\frac13\,r_1+\frac13\,r_2,\qquad
p_3(\br)=\frac13-\frac13\,r_1-\frac23\,r_2.
$$
The admissible domain is the triangle
$$
r_1\ge 0,\qquad r_2\ge 0,\qquad r_1+2r_2\le 1,
$$
as is shown on the figure \ref{fig:R2Delta2}, left.
Hence the diagonal spectral state is
\begin{equation}
\label{rho-r-real-qutrit}
\rho_{\br}
=
\diag(p_1(\br),p_2(\br),p_3(\br)).
\end{equation}

\paragraph{Angular part.}
We use the standard \(zyz\) Euler decomposition
\begin{equation}
\label{SO3-euler}
U(\alpha,\beta,\gamma)=R_z(\alpha)\,R_y(\beta)\,R_z(\gamma),
\qquad
\alpha,\gamma\in[0,2\pi),\quad \beta\in[0,\pi],
\end{equation}
with
$$
R_z(\varphi)=
\begin{pmatrix}
\cos\varphi&-\sin\varphi&0\\
\sin\varphi&\cos\varphi&0\\
0&0&1
\end{pmatrix},
\qquad
R_y(\beta)=
\begin{pmatrix}
\cos\beta&0&\sin\beta\\
0&1&0\\
-\sin\beta&0&\cos\beta
\end{pmatrix}.
$$
The real qutrit state is then written as
\begin{equation}
\label{real-qutrit-rho}
\rho_{r_1,r_2,\alpha,\beta,\gamma}
=
U(\alpha,\beta,\gamma)\,\rho_{\br}\,U(\alpha,\beta,\gamma)^T .
\end{equation}

\paragraph{Real-preserving GKLS dynamics.}
To preserve the real symmetric sector, the Hamiltonian part must be generated
by a purely imaginary Hermitian matrix, equivalently
$$
H=\ii A,\qquad A^T=-A,\qquad A\in\mathfrak{so}(3).
$$
The GKLS equation may therefore be written as
\begin{equation}
\label{GKLS-real-qutrit}
\dot\rho=[A,\rho]+\mathcal{L}_{\mathrm{diss}}[\rho],
\end{equation}
where \(\mathcal{L}_{\mathrm{diss}}\) is assumed to preserve real symmetric
matrices.

Let
$$
\Omega:=U^T\dot U\in\mathfrak{so}(3),
\qquad
\widetilde{A}:=U^TAU,
\qquad
\widetilde{\mathcal{L}}_{\mathrm{diss}}[\rho_{\br}]
:=U^T\mathcal{L}_{\mathrm{diss}}[\rho]U .
$$
Then the conjugated equation reads
\begin{equation}
\label{split-real-qutrit}
\dot\rho_{\br}+[\Omega,\rho_{\br}]
=
[\widetilde A,\rho_{\br}]
+
\widetilde{\mathcal{L}}_{\mathrm{diss}}[\rho_{\br}] .
\end{equation}

\paragraph{Spectral equations.}
Writing
$$
d_i:=\bigl(\widetilde{\mathcal{L}}_{\mathrm{diss}}[\rho_{\br}]\bigr)_{ii},
\qquad i=1,2,3,
$$
the diagonal part of~\eqref{split-real-qutrit} gives
\begin{equation}
\label{pdot-real-qutrit}
\dot p_1=d_1,\qquad
\dot p_2=d_2,\qquad
\dot p_3=d_3,
\qquad
d_1+d_2+d_3=0.
\end{equation}
Hence the gap variables satisfy
\begin{equation}
\label{rdot-real-qutrit}
\dot r_1=d_1-d_2,\qquad
\dot r_2=d_2-d_3.
\end{equation}
As in the qubit case, the spectral variables are driven only by the
dissipative part.

\paragraph{Angular equations.}
Set
$$
k_{ij}:=
\bigl(\widetilde{\mathcal{L}}_{\mathrm{diss}}[\rho_{\br}]\bigr)_{ij},
\qquad
1\le i<j\le 3.
$$
Since \(\rho_{\br}=\diag(p_1,p_2,p_3)\), the off-diagonal part of
\eqref{split-real-qutrit} gives
\begin{equation}
\label{omega-ij-real-qutrit}
\Omega_{ij}
=
\widetilde A_{ij}
-
\frac{k_{ij}}{p_i-p_j},
\qquad i\neq j,
\end{equation}
that is,
\begin{equation}
\label{omega-gap-real-qutrit}
\Omega_{12}=\widetilde A_{12}-\frac{k_{12}}{r_1},
\qquad
\Omega_{23}=\widetilde A_{23}-\frac{k_{23}}{r_2},
\qquad
\Omega_{13}=\widetilde A_{13}-\frac{k_{13}}{r_1+r_2}.
\end{equation}

On the other hand, for the Euler decomposition~\eqref{SO3-euler}, one computes
\begin{equation}
\label{Omega-matrix-euler}
\Omega=U^T\dot U=
\begin{pmatrix}
0 & -(\dot\alpha\cos\beta+\dot\gamma)
  & \dot\alpha\sin\beta\sin\gamma+\dot\beta\cos\gamma \\[1mm]
\dot\alpha\cos\beta+\dot\gamma
  & 0
  & \dot\alpha\sin\beta\cos\gamma-\dot\beta\sin\gamma \\[1mm]
-\dot\alpha\sin\beta\sin\gamma-\dot\beta\cos\gamma
  & -\dot\alpha\sin\beta\cos\gamma+\dot\beta\sin\gamma
  & 0
\end{pmatrix}.
\end{equation}
Therefore,
\begin{align}
\label{alpha-dot-real-qutrit}
\dot\alpha
&=
\frac{\Omega_{13}\sin\gamma+\Omega_{23}\cos\gamma}{\sin\beta},
\\[1mm]
\label{beta-dot-real-qutrit}
\dot\beta
&=
\Omega_{13}\cos\gamma-\Omega_{23}\sin\gamma,
\\[1mm]
\label{gamma-dot-real-qutrit}
\dot\gamma
&=
-\Omega_{12}
-\cot\beta\,
\bigl(\Omega_{13}\sin\gamma+\Omega_{23}\cos\gamma\bigr).
\end{align}
Substituting~\eqref{omega-gap-real-qutrit} yields the explicit angular system
\begin{align}
\label{alpha-dot-full-real-qutrit}
\dot\alpha
&=
\frac{
\left(\widetilde A_{13}-\dfrac{k_{13}}{r_1+r_2}\right)\sin\gamma
+
\left(\widetilde A_{23}-\dfrac{k_{23}}{r_2}\right)\cos\gamma
}{\sin\beta},
\\[2mm]
\label{beta-dot-full-real-qutrit}
\dot\beta
&=
\left(\widetilde A_{13}-\frac{k_{13}}{r_1+r_2}\right)\cos\gamma
-
\left(\widetilde A_{23}-\frac{k_{23}}{r_2}\right)\sin\gamma,
\\[2mm]
\label{gamma-dot-full-real-qutrit}
\dot\gamma
&=
-\left(\widetilde A_{12}-\frac{k_{12}}{r_1}\right)
-\cot\beta
\left[
\left(\widetilde A_{13}-\frac{k_{13}}{r_1+r_2}\right)\sin\gamma
+
\left(\widetilde A_{23}-\frac{k_{23}}{r_2}\right)\cos\gamma
\right].
\end{align}

\paragraph{Interpretation.}
Equations~\eqref{rdot-real-qutrit} and
\eqref{alpha-dot-full-real-qutrit}--\eqref{gamma-dot-full-real-qutrit}
constitute the exact analogue, for real qutrits, of the qubit equations
\eqref{dphi}--\eqref{dr} given in the introduction.  The two spectral variables
$r_1,r_2$ evolve only through the dissipative diagonal terms $d_i$, while
the Euler angles $\alpha,\beta,\gamma$ are driven by both the Hamiltonian
part ($\widetilde A_{ij}$) and the dissipative off-diagonal terms
($k_{ij}$).  The singular denominators $r_1$, $r_2$, and $r_1+r_2$
signal the expected breakdown of angular coordinates at spectral degeneracies,
namely $p_1=p_2$, $p_2=p_3$, or $p_1=p_3$.  An interesting special case might arise when $k_{ij} = f_{ij}(\bm{\vph})\,(p_i - p_j)$,
i.e.\ explicitly
\begin{equation}
    k_{12} = f_{12}(\bm{\vph})\,r_1, \qquad
    k_{13} = f_{13}(\bm{\vph})\,(r_1+r_2), \qquad
    k_{23} = f_{23}(\bm{\vph})\,r_2,
\end{equation}
for some purely angular functions $f_{ij}(\bm{\vph})$. In this case the
ratios $k_{ij}/(p_i - p_j) = f_{ij}(\bm{\vph})$ depend only on $\bm{\vph}$,
and the angular evolution system becomes autonomous, i.e.\ completely decoupled
from~$\br$. Note that this condition is not generically satisfied for an
arbitrary Lindblad model, but holds for specific classes of jump operators.
 A natural physical setting in which this factorization holds is the
\emph{secular} (or \emph{Born--Markov}) approximation~\cite{breupetr02,alilend07}:
in that framework, the Lindblad operators reduce to Bohr-frequency transition
operators in the energy eigenbasis, so that coherences decay multiplicatively
and the off-diagonal dissipator elements in the instantaneous eigenbasis
of~$\rho$ satisfy $k_{ij} = f_{ij}(\bm{\vph})\,(p_i-p_j)$, rendering the
angular equations~\eqref{alpha-dot-full-real-qutrit}--\eqref{gamma-dot-full-real-qutrit}
autonomous.

 This explicit real example with 
$n=3$ illustrates how the gap coordinates $(r_1,r_2)$
 emerge as the natural spectral variables governing the GKLS flow, while the Euler angles capture the orthogonal motion of the eigenframe. The case $n=3$ is not merely a convenient low-dimensional illustration: it corresponds precisely to the trichromatic structure of human colour vision, where the three cone-cell types define a natural $3\times 3$
 density-matrix framework. Interpreting the GKLS dissipative dynamics within this framework opens a novel route to modelling chromatic adaptation and colour discrimination - phenomena traditionally described by purely geometric or psychophysical methods - as the irreversible evolution of an open quantum-like system. A thorough analysis of this instructive case and its perceptual consequences is currently under development \cite{bachiri-etal26}.

\section{Related parametrizations and the precise novelty claim}
\label{sec:related}

We compare our construction with the four papers whose overlap with this
work is closest, in order of relevance.

\paragraph{Coset parametrization and its Bures metric~\cite{akhtarshenas07,akhtarshenas07bures}.}
Akhtarshenas parametrizes $\mathcal D_n$ by writing $U\in\mathrm U(n)/\mathbb
T^n$ as a product of $n-1$ coset factors $\Omega^{(m;n)}\in
\bigl(\mathrm U(m)\otimes\mathbb T^{n-m}\bigr)/\bigl(\mathrm
U(m-1)\otimes\mathbb T^{n-m+1}\bigr)$, and computes the H\"ubner formula for
the Bures metric explicitly in these coordinates for \emph{arbitrary}
$n$~\cite{akhtarshenas07bures}, obtaining a block-diagonal metric $g=g^{(D)}\oplus
g^{(C)}$ with $g^{(D)}$ the eigenvalue block and $g^{(C)}$ the coset block.
This is more general than what we do: his eigenvalue coordinates
$\theta_k$ (defined by $\lambda_k=\sin^2\theta_1\cdots\sin^2\theta_{k-1}\cos^2\theta_k$)
and his coset block are valid at an arbitrary point of $\mathcal D_n$, not
only near the maximally mixed state. What his construction does not provide
is any algebraic interpretation of the eigenvalue coordinates themselves: $\theta_k$
parametrizes the simplex but carries no Lie-theoretic meaning. Our
contribution is narrower in scope --- we work in gap coordinates and state
the metric explicitly only at $\rho=\mathbbm 1_n/n$ --- but it identifies
the eigenvalue coordinates with the fundamental coweights of $A_{n-1}$
(Proposition~\ref{prop:fundcoweights}) and shows that the resulting Fisher--Rao
and Bures metrics at that point are $n$ times the \emph{inverse Cartan
matrix} of $A_{n-1}$ (eq.~\eqref{eq:fisher-r-origin}), a structural fact
absent from~\cite{akhtarshenas07bures}.

\paragraph{Coset and Jarlskog parametrizations, review~\cite{bruning-etal11}.}
Br\"uning, M\"akel\"a, Messina and Petruccione review the Bloch-vector,
coset and Jarlskog parametrizations of $\mathcal D_n$ and note explicitly
that the coset approach, as such, ``offers no parametrization of individual
density matrices'' for general $n$; the Jarlskog construction is the one they
present as achieving injectivity, via a recursive factorization
$U_n=A_{n,n}\cdots A_{n,2}$ of $\mathrm{SU}(n)$~\cite{jarlskog05}. Both of
their eigenvalue sectors are the raw ordered simplex $\Lambda_n=\{\lambda_1\ge
\cdots\ge\lambda_n\ge0,\ \sum\lambda_j=1\}$, subject to $n-1$ ordering
inequalities. Nothing in that review addresses the Bures/Fisher--Rao metric,
a purity functional, or GKLS dynamics; its overlap with the present paper is
limited to the shared shape~\eqref{eq:generic-split}. Our gap coordinates
replace the $n-1$ ordering inequalities by the single inequality
$\sum_a a\,r_a\le1$ (Section~\ref{sec:statespace}), which is the one concrete
improvement we claim over both the coset and Jarlskog eigenvalue sectors.

\paragraph{Flag-based control of Lindblad orbit dynamics~\cite{rooney-bloch-rangan16}.}
This is the closest overlap in substance. Rooney, Bloch and Rangan write the
Lindblad equation in terms of a possibly-degenerate flag $F=(P_1,\ldots,P_{n_d})$
of eigenprojectors and an eigenvalue vector $\Lambda$ obeying the linear ODE
$\dot\Lambda=\Omega^\pi\Lambda$ (their eq.~(17)), and show that the flag
motion --- \emph{intra-orbit} dynamics --- is driven by the full Hamiltonian
control, while the eigenvalues --- \emph{inter-orbit} motion --- evolve
independently once fast unitary control is assumed; the qualitative
statement that eigenvalues are Hamiltonian-blind is essentially theirs. We
do not claim priority on this decoupling. What we add is: (i) an explicit,
non-degenerate coordinate realization of both sectors --- gap coordinates
$\mathbf r$ for the spectrum and Tilma--Sudarshan angles $\vph$ for the flag
--- which their projector calculus deliberately avoids because it must
handle degenerate flags and control-theoretic reachability questions; (ii)
the direct re-derivation of the decoupling from the GKLS equation in these
coordinates (eqs.~\eqref{spectral_eq}--\eqref{omega_eq}), stated without
any control-theoretic apparatus (no controllability, no reachable sets,
none of which we address); and (iii) the metric and purity structure of
Sections~\ref{statcomp}--\ref{sec:purity}, which has no counterpart in their
paper.

\paragraph{Trees, forests and stationary states of Lindblad systems~\cite{rooney-bloch-rangan18}.}
This paper shares only the common ancestor equation
$\dot\Lambda=\Omega^\pi\Lambda$ with~\cite{rooney-bloch-rangan16} (and, more
distantly, with Hioe and Eberly's coherence-vector
equations~\cite{hioe-eberly81}). Its actual content --- a combinatorial
characterization, via trees and forests on a weighted digraph, of the
\emph{kernel} of $\Omega$ and hence of the asymptotic end-states of
generalized-permutation-matrix Lindblad systems --- addresses long-time
stationary behavior, a question we do not touch. 
We mention this connection solely because of the shared starting equation;
no further methodological or result-level overlap exists with the present paper.

\bigskip
\noindent In summary: the spectral/angular splitting~\eqref{eq:generic-split}
and its GKLS decoupling are established results, to which we do not add a
new qualitative claim. What is new, and is the sole basis of our novelty
claim, is (a) the identification of gap coordinates with the fundamental
coweights of $A_{n-1}$ (Proposition~\ref{prop:fundcoweights}); (b) the
resulting Cartan-matrix form of the Fisher--Rao and Bures metrics at the
maximally mixed state (Section~\ref{statcomp}); (c) the piecewise-linear
purity functional $\mathcal R(\rho)$ in gap coordinates
(Section~\ref{sec:purity}); and (d) the explicit Tilma--Sudarshan
coordinatization of the flag manifold together with its coherent-state
resolution of the identity and covariant quantization
(Section~\ref{SUPar}), none of which appear in
\cite{akhtarshenas07bures,bruning-etal11,rooney-bloch-rangan16,rooney-bloch-rangan18}.


\section{Conclusion}
\label{sec:conclusion}

We have introduced a spectral--angular parametrization of finite-dimensional density
matrices, $\rho_{\br,\vph}=U(\vph)\rho_{\br}U(\vph)^\dagger$, which explicitly separates
the spectral variables (encoded in the convex polytope $R_{n-1}$) from the angular
variables (living on the flag manifold $\mathcal{F}_n\simeq\SU(n)/\mathbb{T}^{n-1}$).
This provides a natural extension of the Bloch-ball description of qubits to arbitrary
dimension.

Several structural consequences follow.
\begin{itemize}
\item The state space factorizes into a product of a spectral polytope and a
  homogeneous space, making the role of unitary orbits explicit.
\item The spectral variables $\br$ provide a linear, ordered description of eigenvalues
  directly adapted to majorization, entropy variations, and information geometry.
  \item The GKLS dynamics exhibits a natural partial decoupling: the Hamiltonian part
  acts on angular variables while the dissipative part governs the spectral evolution.
  \item Any nondegenerate density matrix has the following expansion $\rho_{\br,\vph}= \frac{\un_n}{n} + \sum_a r_a U(\vph)\omega_a^\vee U^{\dag}(\vph)$  in terms of  the
fundamental coweight basis of $\mathfrak{sl}(n)$. This Lie algebraic feature  should be of efficient help in making explicit the dissipative part of the GKLS equations.
\item The induced volume factorizes into spectral and angular contributions, yielding an
  explicit formula for the volume of the generic state space.
\item The flag manifold provides a global coherent-state resolution of the identity~\eqref{rho_resol},
  enabling $\SU(n)$-covariant integral quantization~\eqref{quant}.
\end{itemize}

From an information-geometric viewpoint, the $\br$-coordinates appear as natural gap
variables controlling entropy production and large deviations around the maximally
mixed state, while entering directly the structure of monotone metrics (Fisher--Rao,
Bures). Among the byproducts of this parametrization, we single out the piecewise-linear
purity functional $\mathcal{R}(\rho)$, introduced in Section~\ref{sec:purity} as the
natural $n$-level extension of the Bloch-sphere radius: up to normalization, it is the
trace distance from $\rho$ to the maximally mixed state, and hence carries a direct
operational meaning in terms of state discrimination, while remaining exactly linear in
the gap coordinates $\br$ on each region $\{k^*=m\}$ of the weighted simplex $R_{n-1}$,
with weights given by the inverse Cartan matrix of $A_{n-1}$. Unlike the quadratic
purity $\mathrm{Tr}\,\rho^2$, $\mathcal{R}(\rho)$ contracts linearly under mixing with
the maximally mixed state and is computable without recourse to matrix powers, at the
price of a non-smoothness confined to the measure-zero crossover hypersurfaces
$\{p_m=1/n\}$. We regard $\mathcal{R}(\rho)$ and $\mathrm{Tr}\,\rho^2$ as complementary
probes of quantum mixedness -- one spectral and piecewise linear, the other polynomial
and smooth -- and its explicit expression~\eqref{eq:R_r} in terms of the inverse Cartan
matrix of $\mathfrak{sl}(n)$ appears to be new.

The precise relation of $\br$ to the eigenvalue coordinates used in the
existing literature, and the resulting novelty claim, was stated in
Section~\ref{sec:related}.

Beyond our ongoing study of colour perception \cite{bachiri-etal26}, several
directions invite further investigation. On the physical side, the
higher-dimensional cases $n>3$ offer a natural testing ground for spectral
relaxation and decoherence in multilevel open quantum systems, where the full
complexity of the gap-coordinate geometry becomes essential. On the mathematical
side, the covariant integral quantization on the flag manifold
$\mathcal{F}_n\simeq\SU(n)/\mathbb{T}^{n-1}$ deserves systematic development: the interplay
between the quantizer-dequantizer formalism and the stratified structure of the
flag variety remains largely unexplored. We will
develop this  full quantization framework, including the semi-classical phase-space
structures, in a forthcoming paper.
Finally, the polyhedral geometry of the
spectral domain $R_{n-1}$ - a convex polytope whose combinatorial structure
encodes permutation symmetry - connects naturally to majorization theory and
the theory of doubly stochastic maps, suggesting deeper links between the GKLS
dissipative flow and classical notions of disorder and entropy.

A more ambitious direction, which requires careful treatment, concerns the extension to the infinite-dimensional
setting, which acquires a precise algebraic meaning through the infinite simple
Lie algebra $A_{\infty} = \varinjlim \mathfrak{sl}(n)$, whose Dynkin diagram is
the semi-infinite chain $\circ - \circ - \circ - \cdots$. This perspective
sheds new light on the gap coordinates themselves: in the finite case
$A_{n-1} = \mathfrak{sl}(n)$, the simple roots $\alpha_i = e_i - e_{i+1}$ are
precisely the spectral gaps $r_i = p_i - p_{i+1}$, so the natural
spectral variables of the GKLS flow are nothing other than simple root
coordinates in the Cartan subalgebra of $A_{n-1}$. In the limit $A_{\infty}$,
corresponding to trace-class density operators on a separable Hilbert space
$\mathcal{H}$, the gap sequence $(r_1, r_2, \ldots)$ becomes an infinite
positive root string subject to the summability constraint $\sum_i r_i < \infty$
imposed by $\{\lambda\}=p_1,p_2,\ldots \in \ell^1$, while the finite flag manifold
$\SU(n)/\mathbb{T}^{n-1}$ gives way to the flag ind-variety
$\mathrm{SU}(\infty)/\mathbb{T}^{\infty}$ in the sense of Shafarevich-Dimitrov-Penkov,
which becomes the configuration space of the eigenframe. The GKLS flow in this
setting requires careful treatment of domain conditions and unbounded generators,
and the covariant integral quantization must be reformulated in this
ind-geometric setting. Nevertheless, the majorization partial order extends
naturally to $\ell^1$-summable eigenvalue sequences, and the representation
theory of $A_{\infty}$ - in particular its highest-weight modules and the
associated Fock-space constructions - may provide the appropriate
functional-analytic framework for a unified treatment of the GKLS dissipative
flow across all dimensions simultaneously, with potential connections to quantum
field theory and continuous-variable quantum information.


\begin{acknowledgments}
J.-P.~G.\ gratefully acknowledges Universit\'e Mohammed V (Rabat), Mohammed VI
Polytechnic University (UM6P) and Ibn Tofa\"{i}l university, Kenitra, for their kind invitations and financial support, which
provided an excellent environment for collaboration and significantly contributed to the
completion of this work.
\end{acknowledgments}

%


\end{document}